\newcommand{\Hhat}{\mathbf{\hat{H}}}
\newcommand{\Hhatt}{\mathbf{\hat{H}}^\text{T}}
\newcommand{\PP}{\mathbf{P}}
\newcommand{\PPt}{\mathbf{P}^\text{T}}
\newcommand{\sig}{\mathbf{s}}
\newcommand{\sigt}{\mathbf{s}^\text{T}}
\newcommand{\yy}{\mathbf{y}}
\newcommand{\yyt}{\mathbf{y}^\text{T}}
\newcommand{\SIGMAs}{\Sigma_{\sig}}
\newcommand{\mus}{\mathbf{\mu_s}}
\newcommand{\musi}{\mathbf{\mu}^i_{\mathbf{s}}}
\newcommand{\must}{\mathbf{\mu_s}^\text{T}}
\newcommand{\musti}{{\mathbf{\mu}^i_\mathbf{s}}^\text{T}}
\newcommand{\argmin}[1]{\underset{#1}{\operatorname{argmin}} \hspace{0.15cm}}
\newcommand{\mc}{\mathcal}
\newcommand{\mb}{\mathbf}
\newcommand{\mylb}{\left[}
\newcommand{\mylp}{\left(}
\newcommand{\myrb}{\right]}
\newcommand{\myrp}{\right)}
\theoremstyle{plain}
\newtheorem{proposition}{Proposition}
\theoremstyle{plain}
\theoremstyle{definition}
\newtheorem{definition}{Definition}
\theoremstyle{remark}
\newtheorem{remark}{Remark}
\theoremstyle{definition}
\newtheorem{solution}{Solution}
\begin{document}

\title{\textit{Channel Whispering}: a Protocol for \\Physical Layer Group Key Generation \\
\begin{Large}
Application to IR-UWB through Deconvolution
\end{Large} 
}

\author{
\IEEEauthorblockN{I. Tunaru, B. Denis \IEEEauthorrefmark{1}, R. Perrier \IEEEauthorrefmark{1}, B. Uguen \IEEEauthorrefmark{2} \\}
\IEEEauthorblockA{\IEEEauthorrefmark{1}
CEA-Leti Minatec, GIANT Campus, Grenoble, France \\
}
\IEEEauthorblockA{\IEEEauthorrefmark{2}
IETR, University of Rennes 1, Rennes, France \\
Email: iulia.tunaru.it@gmail.com, \{benoit.denis,regis.perrier\}@cea.fr, bernard.uguen@univ-rennes1.fr
}}

\maketitle

\begin{abstract}
As wireless \textit{ad hoc} and mobile networks are emerging and the transferred data become more sensitive, information security measures should make use of all the available contextual resources to secure information flows.  
The physical layer security framework provides models, algorithms, and proofs of concept for generating pairwise symmetric keys over single links between two nodes within communication range. In this study, we focus on cooperative group key generation over multiple Impulse Radio - Ultra Wideband (IR-UWB) channels according to the source model. 
The main idea, proposed in previous work, consists in generating receiver-specific signals, also called \textit{s-signals}, so that only the intended receiver has access to the non-observable channels corresponding to its non-adjacent links. Herein, we complete the analysis of the proposed protocol and investigate several signal processing algorithms to generate the \textit{s-signal} expressed as a solution to a deconvolution problem in the case of IR-UWB. Our findings indicate that it is compulsory to add a parameterizable constraint to the searched \textit{s-signal} and that the Expectation-Maximization algorithm can provide a stable self-parameterizable solution. Compared to physical layer key distribution methods, the proposed key generation protocol requires less traffic overhead for small cooperative groups while being robust at medium and high signal-to-noise ratios. 
\end{abstract}

\begin{IEEEkeywords}
Physical layer secret key generation, Group key, Deconvolution, Expectation-maximization, Channel probing, Impulse Radio-Ultra Wideband, Ad Hoc Wireless Networks.
\end{IEEEkeywords}

\IEEEpeerreviewmaketitle

\section{Introduction}
\label{sec:intro}

A considerable amount of data is generated, exchanged, and collected in modern device-centric applications such as context-aware services in Smart Cities, 
nomadic socialized Internet of Things (IoT), advanced human/machine-to-machine interfaces and communications, participatory sensing and environment monitoring for Big Data analysis, local multi-agent cooperative data fusion, Cooperative Intelligent Transportation Systems (C-ITS). These applications will require emerging networks that have to carry confidential information between mobile users in \textit{ad-hoc} scenarios or from remote sensors to a core-network. 

Most wireless networks capable of supporting such applications naturally require peer-to-peer interactions between end-devices under opportunistic connectivity conditions. This type of scenarios are foreseen by WiFi Direct and Device-to-Device (D2D) options in pending 5G standards, IEEE 802.11p-compliant Vehicular Ad hoc Networks (VANETs), and short range technologies such as Near Field Communications (NFC), Bluetooth-Low Energy (BT-LE), IEEE 802.15.4 (Zigbee), IEEE 802.15.4a or IEEE 802.15.6 Impulse Radio - Ultra Wideband (IR-UWB). Highly susceptible to eavesdropping and impersonation attacks by nature, these emerging wireless networks might also be subject to hardly predictable mobility patterns, erratic users' activity, and varying devices densities, hence requiring flexible and scalable security measures. 

Currently, communication systems rely mainly on high layer symmetric cryptography (based on common secret keys) for data encryption/decryption and on Public Key Cryptography (PKC) for authentication and symmetric key distribution \cite{Katz14}. However, in wireless decentralized or \textit{ad hoc} networks, symmetric key distribution with PKC is deemed challenging. Besides requiring high computational complexity from both implementation and execution perspectives, PKC needs a centralized certified management entity to distribute, refresh and revoke keys or signatures. The latter might be problematic when mobile devices erratically associate or leave the local network assuming short-range physical connectivity. 

Alternative approaches for symmetric key management in decentralized networks include: \textit{i)} the Diffie-Hellman (DH) protocol for pairwise key distribution or its extensions to group key distribution \cite{Bresson2001}; \textit{ii)} low complexity variants of PKC with preliminary secure distribution of location-based private keys \cite{Zhang06_manet}; \textit{iii)} pre-distribution of keying materials at the devices of interest in the deployment phase. It can be noticed that these options require preliminary keying material for generating group keys and become therefore challenging in \textit{ad hoc} or mobile scenarios with opportunistic connections.  

Independently or in support of the aforementioned methods, the physical layer key generation paradigm (PLKG) \cite{Shehadeh15} and its different models (i.e., source, channel, mixed \cite{TunaruPHD}) have been recently put forward starting from information-theoretic studies of secret sharing \cite{Csiszar93} \cite{Bloch11}. Furthermore, some of the PLKG models have been adapted to cooperative scenarios involving several nodes, either to reinforce the generated pairwise keys or to issue a common group key (i.e., shared by more than two nodes).

In the source model of PLKG, on which we focus in this paper, pairwise symmetric keys can be generated after bilaterally measuring the wireless random\footnote{The physical channel is not truly random but it is perceived as random because of the prohibitive complexity of exact deterministic signal reconstruction based on the precise environment characterization especially for IR-UWB (e.g., by ray tracing methods \cite{Raytracing}).} channel between two concerned nodes, quantizing it, and correcting the possible errors. The operation of channel measurement between a transmitter and a receiver is also known as channel probing or channel sounding. In this paper, we investigate cooperative group key generation from IR-UWB multipath channels according to the source model. In previous work \cite{TunaruICUWB15}, we proposed a new physical layer protocol to generate group keys within cooperative scenarios while exploiting all the available physical links in a small full mesh topology and reducing over-the-air traffic with respect to other cooperative higher layer PLKG methods. The main idea consists in adjusting the IR-UWB signals usually transmitted for channel probing so that a target node -and only this node-, can have access to non-observable channels corresponding to its non-adjacent links.\footnote{Non-adjacent links are links between other nodes in the network.} In this way, a collaborative node can \textit{discreetly whisper a channel} to one of its neighbors that does not have physical access to it. This operation leads to a deconvolution problem in the case of IR-UWB, for which we now discuss various solutions that have not been investigated in the initial work \cite{TunaruICUWB15} \cite{Tunaru2017patent}. Firstly, we show that the maximum-likelihood (ML) approaches to the deconvolution problem are not stable with respect to imperfect channel impulse response (CIR) estimates. Then, we introduce a parametrized maximum \textit{a posteriori} (MAP) solution and we analyze two automatic methods for parameterization: Cross Validation (CV) and Expectation Maximization (EM).


The paper starts with an overview of pairwise PLKG followed by a description of the state of the art cooperative key generation methods and a synopsis of our contributions and the limitations of our work (Section \ref{sec:sota}). 
In Section \ref{sec:system} we introduce the system model and definitions. Our main contributions are detailed in Section \ref{sec:methods} starting with the protocol description (Section \ref{sec:protocol}) and continuing with the deconvolution options to compute the optimized transmitted signal (Section \ref{sec:deconvolution}). In Section \ref{sec:results} the deconvolution solutions are compared and the proposed protocol is analyzed in terms of traffic complexity, key length gains, and average bit matching. Finally, section \ref{sec:conclusion} concludes the paper and discusses the perspectives.

\section{Related work}
\label{sec:sota}

\subsection{Single-link physical layer key generation}
\label{sec:single_sota}

Secret key generation based on the physical layer in wireless communications is a particular case of information theoretic secret generation, a more general framework that consists of two main models: the source and the channel models \cite{Bloch11}. In the following we will only describe the source model, to which our work belongs.

Radio propagation characteristics can be used as common source of information for secret key agreement \cite{Badawy16_general}. Due to the unpredictable fading realizations and to the reciprocity of the propagation of electromagnetic waves, the wireless channel between two legitimate users represents a common source of randomness that can be exploited to separately generate a secret key and agree on it through public discussion. A common assumption is that any eavesdropper, situated in a sufficiently distant position with respect to the legitimate users, observes an uncorrelated channel and, therefore, will not be able to generate the same key. A typical point-to-point sequential key generation algorithm consists of \cite{Bloch11}:
\begin{itemize} 
\item \emph{Randomness sharing} (e.g., from channel probing).
\item \emph{Advantage distillation}, an optional step that aims at selecting the channel probes for which the legitimate users consider to have an advantage with respect to an eavesdropper (not needed in the considered source model thanks to the reciprocity and the spatial decorrelation properties). 
\item \emph{Information reconciliation} meant to correct the mismatches due to asymmetric equipment, noise, interferences and temporally distant half-duplex communications by using exchanges over a public channel; this step is usually preceded or jointly implemented with a \emph{quantization} phase, which transforms values issued from channel measurements into binary flows.
\item \emph{Privacy amplification}: deterministic processing of the common bit sequences in order to generate a secure secret-key by ``compensating'' the information leakage on the public channel (e.g., hash functions or randomness extractors).
\end{itemize}

The trade-offs arising in single-link physical layer key generation have been studied for different technologies and channel metrics: received signal strength (RSS) \cite{Patwari10}, OFDM coefficients \cite{Alfandi12} \cite{Badawy16}, IR-UWB channel responses \cite{Paolini14} \cite{TunaruPIMRC14} \cite{Huang15Wiley}.  

The aforementioned methods provide a symmetric key shared by only two nodes (i.e., point-to-point) and generated from a single communication channel (i.e., single link). The channel can be also probed in time (over successive transmissions) for obtaining longer keys or for refreshing keys but the performance is dependent on the channel coherence time (e.g., in mobility cases). 

Therefore, an elementary issue for physical layer key generation is how to gather more entropy from channel measurements during a limited amount of time when the channel can be considered static. This leads to the idea of extending the key generation process to several nodes in order to exploit more physical links (cooperative/multi-link strategy) and/or to generate a group key (contrary to a point-to-point key). However, the solution should be scalable, adapted to \textit{ad hoc} scenarios and should avoid the high level complexity issues of classical key distribution techniques (e.g., key pre-distribution, latency, etc.).

\subsection{Multi-link physical layer key generation}
\label{sec:multi_sota}

The secret key capacities for the source model with multiple terminals including a subset of helpers, various extents of an eavesdropper's knowledge, and unrestricted public discussion have been characterized from an information theoretic perspective \cite{Csiszar04}. First, it has been shown that the secret key capacity when the eavesdropper only observes the public exchanges without having any side information regarding the source is closely related to the multiterminal source coding problem with no secrecy constraints. Then, the expression of the secret key capacity when the eavesdropper wiretaps a subset of the helpers has been also derived. 

Multi-terminal or cooperative secret key generation has also been investigated for less complex systems with the aim to design practical protocols and to measure their performance. One early study on the topic presents an extension of the source model to cooperative pairwise key agreement and group key generation in a pairwise independent network (i.e., a network in which the point-to-point channels are independent) \cite{Ye07}: point-to-point keys are generated from each physical network link and the group keys (or extra secret bits for pairwise cooperative keys) are propagated through XOR-ing operations over a graph representation of the network. 

Other authors propose to generate the secret key between two user nodes with the help of a relay \cite{Liang12} \cite{Wang12}. First, non-cooperative pairwise keys (from the main channel and the side channels, i.e., the channels between each node and the relay) are generated using a typical key distillation procedure based on channel gains \cite{Liang12} or phase estimations \cite{Wang12}. At this point, each user node holds the key from the main channel and a key from a side channel. After that, the relay, which holds the keys obtained from both side channels, publicly broadcasts the bitwise combination of those keys. The user nodes can therefore recover the key that was generated by the other user node and the relay and append one of these side channel keys to the main channel key. An equivalent approach based on pairwise keys is described in parallel work \cite{Wei12}, where the received signal strength is quantized for non-cooperative pairwise key generation and a group key, generated by a root node, is securely distributed in the network using the pairwise keys. 

If the wireless channels are static in time or too sparse in terms of multipath, there might not be enough information to harvest in order to generate a robust secret key. In order to deal with the issue of limited entropy of the source model, a more recent study \cite{Wang14} extends the channel model for key generation to a cooperative scenario with a relay and an eavesdropper that is collocated with the relay. The authors derive the upper and lower bound for the secret key rate with a relay and propose a joint optimized design of the various key generation phases (advantage distillation, information reconciliation and privacy amplification) while focusing on the trade-off between security and protocol efficiency. Although it is shown that collocating with the relay is the worst case scenario for the secret key rate, this assumption also facilitates the advantage distillation phase because the legitimate users are informed about the quality of the eavesdropper's signal by the cooperative relay. 

\subsection{Contributions and limitations}
\label{sec:contributions}

Overall, the aforementioned work on cooperative pairwise/group keys rely on the initial pairwise single-link key generation and subsequent key distribution. This involves extra-traffic and latency, while the length of the group key can be limited by the shortest pairwise keys. Herein, we describe a method of physical layer group key generation that avoids the pairwise key generation before group key generation. An alternative way to deal with the entropy limitation in the case of the source model would be to use multiple links as input signals for quantization. The final group key is obtained after quantization and reconciliation of a concatenation of measurements from several links. The protocol involves several cooperative nodes and the obtained key is by construction known to all the participants, so it becomes a group key. This solution avoids extra public communication overhead but comes with an expense on the signal processing side where supplementary operations for channel probing are needed.

The indirect probing of a non-adjacent channel is possible because of the cooperation between nodes, which will send specific signals (called \textit{s-signals} in the following) in order to \textit{whisper} a certain channel state to the receiver. This operation depends on the type of measured signal; in the case of IR-UWB channel responses, which we will use for illustration purposes, the operation consists in a deconvolution operation, which is one of the scenarios with the highest complexity. This could be nonetheless acceptable for current personal devices such as smartphones and possibly even for next generation wireless sensors. The general concept and the protocol are however applicable to different technologies and channel measurements, which would require less complex signal processing capabilities (e.g., quotient operations). 

It should be noted that similar concepts (e.g., IR-UWB time-reversal \cite{DeNardis11}) have been put forward for improved communication robustness and intrinsic signal secrecy by spatial focusing of the signal energy. These methods rely on pre-filtering on the transmitter side and thus, enable location-dependent SNR gains on the receiver side. However, they are neither intended to provide secret material, such as keys, to higher layer cryptographic functions nor used in cooperative protocols as described in our method.

It has been brought to our knowledge that ideas similar to our \textit{whispering} concept have been studied before \cite{Koksal2014} or in parallel \cite{Quek2015} to our initial work \cite{TunaruICUWB15} \cite{Tunaru2017patent}. In a sense, \textit{channel whispering} is more general than the aforementioned studies because they deal with the RSS metric \cite{Koksal2014} or narrow-band frequency-flat fading channels \cite{Quek2015} for which \textit{the whispering} is equivalent to subtraction and quotient operations respectively. Moreover, the group key generated for the star and chain topologies \cite{Koksal2014} depends only on one shared physical channel. This limitation is surpassed in \cite{Quek2015} where a more general model than ours is given for group key generation in a mesh network (selection of the used channels for key generation, \textit{whispering} of a weighted combination of different estimated channels with optimized coefficients, different quantization schemes including vector quantization). Also, a similar approach is taken in recent work but from an attacker point of view \cite{Harshan2017}.

\paragraph{Contributions}
To sum up, the main contributions of our work on cooperative PLKG are:
\begin{itemize}
\item a cooperative protocol to generate a secret group key within the source model of the paradigm of physical layer key generation \cite{TunaruICUWB15} \cite{Tunaru2017patent}; 
\item the evaluation of several temporal deconvolution solutions required for the application of the cooperative protocol to the IR-UWB physical layer (one of the most challenging for the cooperative key generation purpose), thus assessing the feasibility and the robustness of the solution;
\item a signal model for EM-based deconvolution that can be extended to include a channel estimation error model;
\item a complete comparison of the proposed group key generation protocol to an alternative approach based on group key distribution. This comparison extends the previous results \cite{TunaruICUWB15} by adding a general traffic analysis for an arbitrary number of nodes in the network, emphasizing thus the scalability aspects.
\end{itemize}

\paragraph{Limitations}
The present work focuses on the signal-processing aspects of the achievement of common randomness in IR-UWB mesh networks. However, it could be improved by analyzing the following ``more'' information-theoretic or experimental aspects:
\begin{itemize}
\item leakage analysis: although the channel of the eavesdropper is considered uncorrelated with the legitimate channel, the eavesdropper can recover the \textit{s-signal} by deconvolution. Despite being highly insufficient to deduce the \textit{whispered} channel, this information leakage should be quantified and used in the design of the privacy amplification scheme. 
\item scaling of the recovery noise with the number of nodes: in our work we present the scaling of the traffic packets with the number of nodes in a mesh network but the key rate is hindered by the deconvolution noise which also scales with the number of nodes. This aspect could be incorporated in a general analysis of the secret key rate.
\item realistic measurements and synchronization issues: our study could be completed by using real IR-UWB measurements from small mesh networks or ray-tracing generated signals and artificial noise \cite{Raytracing}. Also, the synchronization issues, which have not been discussed here as explained further (Section \ref{sec:results1}), can have a great impact of the overall performance of the algorithm. This is a common issue to all key generation scheme \cite{Pasolini2015} but its impact could be different in cooperative scenarios and depending on the processing of the received signal before key generation. 
\end{itemize} 

\section{System model}
\label{sec:system}

We consider a full mesh topology consisting of three nodes ($A$, $B$, and $C$) with direct IR-UWB links between each pair. The received signal can be expressed as
\begin{eqnarray}
\label{eq:sounding}
y_{uv}(t) & = & (s_u*h_{uv})(t) + w_{v}(t), \text{ with } 
w_{v}(t) \sim \mathcal{N}(0,\sigma^2_{w}) \\
\label{eq:cir_coop}
h_{uv}(t) & = & \sum_{k=1}^{K}{x_{k} \delta(t-\tau_{k})}
\end{eqnarray}
where $y_{uv}(t)$ of duration $T_w$ is a general convolved noisy received signal between the transmitter $u \in \{A,B,C\}$ and the receiver $v \neq u, v \in \{A,B,C\}$ (later the convolved signal will be denoted as $y$ or $r$ depending on the transmitted signal), $s_u(t)$ is a general signal transmitted by node $u$ (e.g., a pulse waveform $p(t)$ of temporal support $[0,T_p]$ employed for channel sounding or another signal used for channel \textit{whispering}), $h_{uv}$ is the channel impulse response between $u$ and $v$, $x_k$ and $\tau_k$ are respectively the amplitude and delay associated with the $k$\textsuperscript{th} ($k \in \{1,..K\}$) multipath component of the CIR between $u$ and $v$, $w_{v}(t)$ is the additive white Gaussian noise (AWGN) at the receiver $v$.

Our key generation protocol employs all the three available channels in order to generate a secret group key between the three nodes. In the following, we give some definitions and remarks concerning the present system model. 

\begin{definition}
We arbitrarily define the signal-to-noise ratio (SNR) as the ratio between the power of the transmitted pulse over its temporal definition domain and the noise power, assuming that all the CIRs are normalized in energy:\footnote{$\int_0^{T_w} h^2(t) \, \mathrm{d}t = 1$.}
\begin{eqnarray}
\text{SNR}&=\frac{P_{pulse}}{P_{noise}} = \frac{\frac{1}{T_p} \int_0^{T_p} p^2(t) \, \mathrm{d}t}{\sigma^2_{w}}
\end{eqnarray}
\end{definition}

\begin{definition}
An \textit{adjacent channel} is the channel that can be directly measured by a node (e.g., channels $[A-C]$ and $[B-C]$ for node $C$). Accordingly, a \textit{non-adjacent channel} is a channel that cannot be directly measured by a node (e.g., channel $[A-B]$ for node C).
\end{definition}
 
\begin{definition}
We denote as \textit{channel probe} the necessary operations in order for all the nodes to acquire the needed signals for key generation (i.e, adjacent and non-adjacent channels). 
\end{definition}
 
\begin{remark}[Reciprocity and spatial decorrelation]
\label{rem:rec_decorr}
The channels ($h_{AB}$, $h_{AC}$, and $h_{BC}$) are considered reciprocal ($h_{uv}=h_{vu}$) and pairwise independent at a given channel probing time. Although the assumption on the CIR reciprocity is admitted in static environments, some distortions can be induced for the convolved CIRs by the realistic radios with asymmetric amplifiers, filters etc. or by half-duplex channel sounding of each channel. Nevertheless, the reciprocity, the temporal correlation at small temporal scales despite half-duplex sounding, and the spatial decorrelation between different channels are supported by previous experimental studies \cite{Sana} \cite{Madiseh09} \cite{Paolini14}, which have shown that IR-UWB channels are sufficiently reciprocal and spatially uncorrelated when the three nodes are reasonably distant. However, spatial decorrelation is difficult to assess in a general manner because it is directly impacted by the physical environment: the higher the complexity of the physical environment, the more diverse the multipath propagation profiles, and thus the lower the spatial correlation. For our tests, we have employed independent statistical IR-UWB channel realizations based on modified versions of the Saleh-Valenzuela model, namely the IEEE 802.15.4a channel models \cite{ieee802154a}. 
\end{remark}

\begin{remark}[Temporal independence]
\label{rem:independency}
It is assumed that the channels vary from one channel probe to the next one independently of the past channel realizations. The proposed group key generation protocol must be performed only once within the coherence period of the physical channels (i.e., the time over over which the physical channels are static). Depending on the channel probing operations, some protocol variants (e.g., EM-based deconvolution) could need longer coherence times, while others (e.g., MAP-based deconvolution) would be more adapted to dynamic environments. 
\end{remark}

\section{Cooperative physical layer key generation}
\label{sec:methods}

We start by illustrating the proposed cooperative key generation (CKG) algorithm in a group of three interconnected nodes (Section \ref{sec:protocol}). Since the channel probing step will require a deconvolution operation, we analyze several options to achieve this (Section \ref{sec:deconvolution}). For readability reasons, we employ a continuous time framework for the protocol description and we switch to the realistic discrete case when discussing the deconvolution.

\subsection{Protocol description}
\label{sec:protocol}

In the following, $A$, $B$, and $C$ have different roles and the protocol must be repeated with interchanged roles in order to obtain a group key. For a given protocol cycle, we define $A$ as the cooperator, $B$ as the initiator and $C$ as the generator. It is assumed that the radio transmissions between each transmitter and each receiver are preliminarily synchronized at the signal level (e.g., using specific heading). The CKG protocol consists of several phases representing adjacent channel probing, non-adjacent channel \textit{whispering}, and key generation:

\begin{figure}[h]
\centering
\includegraphics{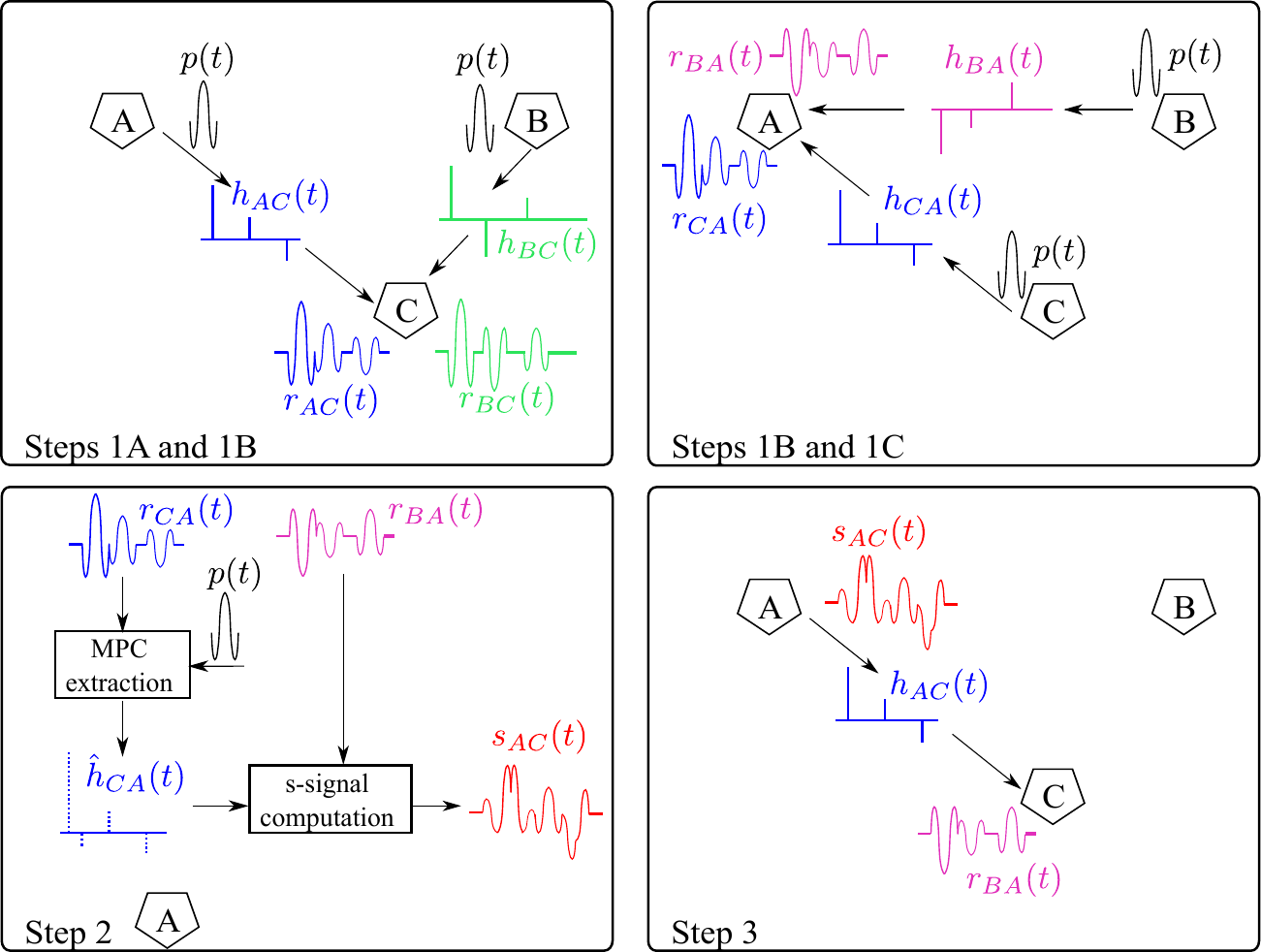}
\caption{Physical layer cooperative key generation between three nodes.}
\label{fig:ckg}
\end{figure}

\begin{itemize}

\item Adjacent channel probing: channel sounding using a pulse template signal to obtain adjacent channel responses for the generator $C$ ($y_{AC}$ and $y_{BC}$) and for the cooperator $A$ ($y_{BA}$ and $y_{CA}$) according to Eq. \eqref{eq:sounding}. This phase can be achieved with three successive broadcast transmissions from each node denoted as 1A, 1B, and 1C in Figure \ref{fig:ckg}.    

\item Non-adjacent channel \textit{whispering}:
\begin{itemize}
\item CIR estimation (i.e., delay and amplitude extraction) at the cooperator $A$ from the received $y_{CA}$ (Step 2 in Figure \ref{fig:ckg}). This operation can be achieved by sampling at frequency $F_s$ and correlation with an \textit{a priori} pulse template (corresponding to the expected unitary received pulse waveform $p(t)$). For example, the search-subtract-readjust algorithm \cite{Gifford11} employed herein iteratively detects, estimates and then subtracts multipath components out of the acquired received signal $y_{CA}$ to obtain $\hat{h}_{CA}$.
	
\item computation of the so-called \textit{s-signal} $s_{AC}$ at the cooperator $A$ (Step 2 in Figure \ref{fig:ckg}). The previous channel estimate $\hat{h}_{CA}$ is used by the cooperator $A$ to compute $s_{AC}$ to be sent to the generator $C$ so that the latter deduces out of its received signal only the non-adjacent channel $[B-A]$ represented by $y_{BA}$.\footnote{The received signal at the generator is ideally close to the non-adjacent channel of the generator.} The problem needs to be solved at the cooperator $A$ is: 
\begin{eqnarray}
\text{Find $s_{AC}$ s.t. } (s_{AC}*\hat{h}_{CA})(t) = y_{BA}(t)
\label{eq:ckg_deconv}
\end{eqnarray}

\item transmission of the computed \textit{s-signal} $s_{AC}$ from $A$ to $C$ (Step 3 in Figure \ref{fig:ckg}). Accordingly, the employed transmitter must enable the programming of an arbitrary IR-UWB waveform, given an \textit{a priori} occupied bandwidth and a power spectral density limitation (e.g., \cite{TCR}), in order to generate $s_{AC}$. The generator $C$ receives:
\begin{equation}
r_{BA}(t) = (s_{AC}*h_{AC})(t) + w_{AC}(t)
\end{equation}

Steps 2-3 can be performed having as cooperator both $A$ and $B$ because they can both measure $C$'s non-adjacent channel $[B-A] \approx [A-B]$. If both of them send an \textit{s-signal}, supplementary processing can be used at $C$ to coherently take advantage of the two incoming signals $r_{BA}$ and $r_{AB}$ in order to obtain a more reliable version of the non-adjacent channel. This extension is not discussed in the present work. Further, steps 2-3 are repeated by all the nodes with interchanged roles so that all of them obtain the non-adjacent channels (i.e., we simply perform three circular rotations). 
\end{itemize}

\item Key generation:
\begin{itemize}
\item processing of all the acquired signals at the generator $C$ (i.e., two direct adjacent channel observations and one non-adjacent channel reconstruction). At this point, $C$ has the signals corresponding to all three channels. These signals are further processed (e.g., through windowing for removing irrelevant parts of the signal, signal squaring, low-pass filtering and down-sampling at frequency $F_p$, compatible with the multipath resolution capability allowed by the signal bandwidth). This processing is preferable in order to obtain signals whose quantization is more robust with respect to mismatches. The generator $C$ then concatenates the three signals to obtain the input quantization signal $S_{CAB}$. The concatenation order can be defined arbitrarily, according to the node ID or based on radio parameters that can be extracted from the three radio signals (e.g., delay spread, mean excess delay or kurtosis as long as these macroscopic radio features are sufficiently reproducible at all nodes).

\item quantization of the input signal $S_{CAB}$ using, e.g., uniform quantization with guard-bands.

\item public discussion between the three nodes: i) sharing the indexes of the dropped samples falling into the guard-bands; ii) error correction using Reed-Solomon codes (a lead node, for example $A$, generates a syndrome representing its own bit sequence and sends it over the public channel to the other nodes, which will try to decode/correct their bit sequences to align them to $A$'s sequence).  
\end{itemize}

\end{itemize}

\subsection{Deconvolution}
\label{sec:deconvolution}

Searching for a solution to Eq. \eqref{eq:ckg_deconv} is a non-trivial deconvolution problem. We restrict our analysis to the temporal domain of IR-UWB signals in order to avoid supplementary processing incurred by the Fourier Transform, but also because of the richness of data processing techniques concerning deconvolution in similar domains (e.g., statistical spatial methods for image deconvolution). We will first present the problem formulation and then explore several solutions. 

\subsubsection{Problem formulation}
\label{sec:formulation}

As mentioned before, the cooperator needs to solve Eq. \eqref{eq:ckg_deconv}, which is now rewritten in discrete form without the node indexes for readability purposes:
\begin{eqnarray}
\Hhat \sig &= \yy
\label{eq:ckg_deconv_matrix}
\end{eqnarray} 
where $\yy$ is the $N \times 1$ sampled version of the generator's non-adjacent channel $y_{BA}(t)$, $\sig$ is the searched \textit{s-signal} of $N_s$ samples, $\Hhat$ is the $N \times N_s$ matrix corresponding to the $N_h \times 1$ convolution kernel $\mb{\hat{h}}$ such that $\Hhat \sig = \mb{\hat{h}} * \sig$. The sampling frequency of these signals is denoted as $F_s$, the same frequency being used for the previous CIR estimation of $\mb{\hat{h}}$.  

\begin{proposition}
In order to ensure the existence of a solution to the deconvolution problem in Eq. \eqref{eq:ckg_deconv_matrix}, the number of samples of the \textit{s-signal} should be $N_s = N+N_h-1$.
\end{proposition}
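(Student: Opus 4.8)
The plan is to read Eq.~\eqref{eq:ckg_deconv_matrix} as a linear system with $N$ scalar equations (one per sample of $\yy$) and $N_s$ unknowns (the samples of $\sig$), and to argue that the choice $N_s=N+N_h-1$ is precisely what makes the coefficient matrix $\Hhat$ \emph{wide} with full row rank, hence surjective, so that a solution exists for every admissible target $\yy$.

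First I would write the discrete convolution explicitly, $y[n]=\sum_{k=0}^{N_h-1}\hat h[k]\,s[n-k]$ for $n=0,\dots,N-1$, and track which samples of $\sig$ actually influence the $N$ prescribed output samples. As $n$ ranges over $0,\dots,N-1$ and $k$ over $0,\dots,N_h-1$, the input index $n-k$ sweeps the integers from $-(N_h-1)$ to $N-1$, i.e.\ exactly $N+N_h-1$ distinct samples. This degrees-of-freedom count is the heart of the statement: to retain one free variable for every input sample that contributes to $\yy$, the \textit{s-signal} must carry $N_s=N+N_h-1$ samples, and the associated matrix $\Hhat$ then has shape $N\times(N+N_h-1)$, i.e.\ strictly more columns than rows.

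Next I would verify that this wide banded Toeplitz matrix indeed has full row rank $N$, which is the step I expect to be the real obstacle. Each row of $\Hhat$ is a shifted copy of the kernel $\mb{\hat{h}}$, so I would isolate the $N\times N$ sub-block obtained by keeping the $N$ columns on which the first kernel tap $\hat h[0]$ sits on the diagonal; under the mild assumption that the leading tap of $\mb{\hat{h}}$ is nonzero (equivalently one could use a nonzero trailing tap), the entry $\hat h[n-m]$ vanishes whenever $m>n$, so this sub-block is lower-triangular with constant nonzero diagonal $\hat h[0]$ and is therefore nonsingular. Hence $\mathrm{rank}(\Hhat)=N$, $\Hhat$ maps onto $\mathbb{R}^N$, and $\yy\in\mathrm{range}(\Hhat)$ for every $\yy$, so at least one $\sig$ solving $\Hhat\sig=\yy$ exists; in fact the solution set is an affine subspace of dimension $N_s-N=N_h-1$.

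Finally I would close the argument by checking the converse to justify that $N+N_h-1$ is the right count and not merely sufficient: if $N_s<N$ the matrix $\Hhat$ is tall, its rank is at most $N_s<N$, its range is a proper subspace of $\mathbb{R}^N$, and a generic (noisy, not exactly convolutional) target $\yy$ cannot be reached, so existence is not guaranteed; being wide, i.e.\ $N_s\ge N$, is therefore necessary, and the support count above pins the natural minimal value at $N+N_h-1$. The one subtlety I would flag explicitly is the mild abuse in writing $\Hhat\sig=\mb{\hat{h}}*\sig$: with $N_s=N+N_h-1$ the full linear convolution is longer than $N$, so $\Hhat$ must be understood as selecting the $N$ fully overlapped (``valid'') output samples that are matched to $\yy$, and stating this windowing convention up front keeps the dimension bookkeeping unambiguous.
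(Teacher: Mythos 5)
Your proposal is correct, and it reaches the existence claim by a genuinely different route than the paper. Both arguments share the same dimensional bookkeeping: the system is read under the ``valid-part'' windowing convention (which you rightly flag explicitly), the count of input samples influencing the $N$ observed outputs gives $N_s=N+N_h-1$, and a smaller $N_s$ yields an overdetermined system that is generically inconsistent (the paper makes this point specifically for the full-convolution choice $N_s=N-N_h+1$). Where you diverge is in establishing existence: the paper invokes the Rouch\'e--Capelli theorem and argues \emph{probabilistically} that $\mathrm{rank}([\Hhat|\yy])=\mathrm{rank}(\Hhat)$ with high probability, because the spatial decorrelation of IR-UWB channels makes $\Hhat$ and $\yy$ uncorrelated, so appending $\yy$ should not raise the rank. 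You instead prove \emph{deterministically} that the wide banded Toeplitz matrix $\Hhat$ has full row rank $N$, by extracting the $N\times N$ triangular sub-block whose diagonal carries the leading (or trailing) kernel tap, assumed nonzero; surjectivity then guarantees a solution for \emph{every} target $\yy$, and you additionally get the dimension $N_h-1$ of the solution set. Your argument is strictly stronger: once $\mathrm{rank}(\Hhat)=N$, the augmented matrix $[\Hhat|\yy]$ automatically has rank $N$ as well, so the paper's Rouch\'e--Capelli condition holds for all $\yy$, not merely with high probability, and no appeal to channel statistics is needed (the nonzero-tap assumption is mild and can even be removed by running the same triangular-block argument on the first nonzero tap of $\mb{\hat{h}}$). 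What the paper's formulation buys in exchange is independence from any structural assumption on the kernel and an explicit link to the physical modeling assumption (spatial decorrelation) that the rest of the protocol relies on anyway.
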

\begin{proof}
Considering the classical discrete convolution definition, $N_s$ should be $N-N_h+1$. This leads to an overdetermined system of equations, which is consistent (i.e., has one or an infinity of solutions) only if a certain number of equations are linear combinations of the rest of the equations. As the coefficients from $\Hhat$ are random by construction, the present system has high chances of being inconsistent, depending on the channel.

Consequently, we give the \textit{s-signal} more degrees of freedom by imposing that only the valid part of the convolution\footnote{The valid part of a deconvolution represents the samples of the central part of the convolution, where the two input signals overlap entirely. These samples are obtained from the summing of $\min(N_s,N_h)$ non-zero terms.} approaches $\yy$, which implies that $N_s=N+N_h-1$. Eq. \eqref{eq:ckg_deconv_matrix} becomes then an underdetermined system, which can have zero or an infinity of solutions. 

According to the Rouché–Capelli theorem of linear algebra, a system of linear equations with $N_s$ variables has a solution if and only if the rank of its coefficient matrix ($\Hhat$) is equal to the rank of its augmented matrix ([$\Hhat|\yy$]). This is true with a high probability in our case because we consider that IR-UWB channels are spatially uncorrelated, and thus $\Hhat$ and $\yy$ are uncorrelated. This implies that adding the extra column $\yy$ in the coefficient matrix $\Hhat$ will not change the rank. 
\end{proof}

\subsubsection{Deconvolution solutions}
\label{sec:deconv_sol}

From a numerical point of view, we need to find an algorithm that provides a solution to our deconvolution problem and check whether it is suitable in the context of uncertain channel estimates represented by $\Hhat$. Several least squares formulations are presented in the present section and their performance is assessed in Section \ref{sec:results1}.

\begin{solution} A maximum likelihood solution (ML):
\begin{eqnarray}
\label{eq:ml0}
\sig^{ML} &=& \argmin{\sig} || \yy - \mb{\Hhat} \sig ||^2 \\
\label{eq:ml}
\sig^{ML} &=& (\Hhatt \Hhat)^{-1} \Hhatt \yy
\end{eqnarray} 
Despite its simplicity, this solution is shown to be instable with respect to the uncertain channel estimates. 
\end{solution}

\begin{solution}  A maximum a posteriori solution, in which we add a  weighted constraint on the searched \textit{s-signal} (MAP):\footnote{The MAP terminology comes from the right term in Eq. \eqref{eq:map0} which could be understood as a prior over the \textit{s-signal}.}
\begin{eqnarray}
\label{eq:map0}
\sig^{MAP} &=& \argmin{\sig} || \yy - \mb{\Hhat} \sig ||^2 + \lambda || \PP \sig ||^2 \\
\label{eq:map}
\sig^{MAP} &=&(\Hhatt \Hhat + \lambda \PPt \PP)^{-1} \Hhatt \yy
\end{eqnarray}

The instability issue of the ML solution can be efficiently addressed by adding a penalty term. This operation is known as a Tikhonov regularization, where matrix $\mb{P}$ is chosen in order to constrain $\mb{s}$ and $\lambda$ is a real scalar trade-off parameter also referred to as weight in the following. The penalty term, also called prior in a Bayesian setting, enforces the desired characteristics of the optimized \textit{s-signal} (e.g., minimal $l_2$-norm, smoothness, etc.), while the first term keeps the result after convolution close to the target channel $\yy$. In our tests, we have chosen to minimize the norm of the searched signal $\sig$ by using $\mb{P}=\mb{I_{N_s}}$ since we do not have other \textit{a priori} information about it. From an implementation point of view, it would have been preferable to use a smooth prior but the accuracy of this type of prior depends on the sampling frequency and can be false resulting in a high penalty of the data fit. 
\end{solution}

\begin{solution} The aforementioned MAP solution with a cross-validation technique in order to set the weight of the constraint (MAP-CV). 

Cross-validation (CV) methods are statistical tools for model validation, i.e., they are employed to evaluate how well a given model will generalize to unknown variations in the data set. Given a known data set of size $N$ (in our case, the channel estimates $\Hhat$ of size $N\times N_s$ and the target channel $\yy$ of size $N \times 1$), a basic cross-validation procedure splits it randomly into a training set of indexes $\mc{T}$ and a validation set of indexes $\mc{V}$ such that $\mc{V} \cap \mc{T} = \emptyset; \mc{V} \cup \mc{T} = \{1,\dots,N\}$. First, Eq. \eqref{eq:map} is solved using only the training data set. Then, the found solution $\sig_t(\lambda)$, is plugged into the given model for the validation set and the performance is evaluated via the generalization error from Eq. \eqref{eq:ge}. The optimal CV $\lambda$ value in a fixed range $[\lambda_{min}, \lambda_{max}]$ is the one that minimizes the generalization error $\Delta_g(\lambda)$. 
\begin{eqnarray}
\label{eq:ge}
\Delta_g(\lambda) &=& || \Hhat_{v} \sig_t(\lambda) - \yy_{v}|| \\
\lambda^* &=& \argmin{\lambda \in [\lambda_{min}, \lambda_{max}]} \Delta_g(\lambda)
\end{eqnarray} 
\end{solution}

\begin{solution}
An expectation maximization solution, which introduces a Bayesian model for the deconvolution problem and provides a joint estimation of its parameters (EM).

In order to apply Bayesian techniques for jointly estimating the \textit{s-signal} and the prior weight, we create a statistical model corresponding to the regularized deconvolution equation. We identify the known data as the target signal ($\yy$) and the hidden data as the searched \textit{s-signal} ($\sig$). The estimated channel ($\Hhat$) is considered a deterministic fixed quantity in the present model. Thus, we have two equations: one for the data fit, where $\mb{e}$ represents the fitting error with independent samples of mean 0 and variance $\epsilon^2$, and one for the signal prior, which offers an artificial representation of the resulting waveform as a noisy zero-mean process of sample variance $\gamma^2$. By identification with the MAP solution, we get $\epsilon^2 / \gamma^2 = \lambda$. Although the MAP solution expressed in Eq. \eqref{eq:map} depends only on the value of $\lambda$, the present model provides a richer description of the underlying phenomena because the two model parameters ($\epsilon$ and $\gamma$) have a concrete meaning (i.e., $\epsilon$ represents the capacity of the model to fit the known data and $\gamma$ can represent, for example, the energy of the searched signal). 
\begin{eqnarray}
\label{eq:data_attach}
\yy = \Hhat \sig + \mb{e}  \text{, with } \mb{e} & \sim & \mathcal{N}(\mb{0},\epsilon^2 \mb{I_N})\\
\label{eq:prior}
 \PP \sig & \sim & \mathcal{N}(\mb{0},\gamma^2 \mb{I_{N_s}})
\end{eqnarray}

Signal estimation using a model with unknown parameters can be solved by an Expectation Maximization (EM) algorithm. EM has been discovered and used independently in several domains ranging from genetics, statistics (estimation of parameters of mixture distributions) to maximum likelihood image reconstruction and speech recognition (estimation of parameters of Hidden Markov models) \cite{Moon96}, \cite{Zoubin99}, \cite{Dempster77}. This iterative algorithm alternates between two steps: i) E-step solving for the hidden variables ($\sig$) knowing the latest parameter estimates ($\epsilon, \gamma$) and the given data ($\yy$); ii) M-step finding the optimal parameters, knowing the current signal estimation and the given data. The convergence can be proved by showing that the algorithm increases the likelihood at each iteration \cite{Borman09}. Therefore, we can write Bayes' rule for the new model and apply the EM algorithm at each iteration step $i$ as follows:
\begin{eqnarray}
\label{eq:bayes_ckg}
p(\sig|\yy,\epsilon,\gamma)& = & \frac{p(\yy|\sig,\epsilon) \times p(\sig|\gamma)}
{p(\yy|\epsilon,\gamma)} \\
\text{E-step:  } \xi_i(\epsilon,\gamma) & = & \mathbb{E}_{\sig|\yy,\epsilon_{i-1},\gamma_{i-1}}
[\ln p(\yy,\sig|\epsilon,\gamma)] \\
\text{M-step:  } (\epsilon_i,\gamma_i) & = & \underset{(\epsilon,\gamma)}{\arg\max} \text{ } \xi_i(\epsilon,\gamma)
\end{eqnarray}
After the developments detailed in Appendix \ref{app_em}, at iteration $i$ we obtain a mean signal $\musi  = E[\sig]$, its associated covariance $\SIGMAs^i$, and the model parameters $\epsilon_i$ and  $\gamma_i$:
\begin{eqnarray}
\SIGMAs^i & = & (\epsilon_{i-1}^{-2} \Hhatt \Hhat + \gamma_{i-1}^{-2} \PPt \PP)^{-1}\\
\musi & = & \epsilon_{i-1}^{-2} \SIGMAs^i \Hhatt \yy \\
\epsilon_i   & = & \sqrt{\frac{\yyt \yy - 2 \yyt \Hhat \musi + \mathrm{Tr}(\Hhatt \Hhat \SIGMAs^i) + \musti \Hhatt \Hhat \musi}{N}} \\
\gamma_i 		 & = & \sqrt{\frac{\mathrm{Tr}(\PPt \PP \SIGMAs^i) + \musti \PPt \PP \musi}{N_s}} 
\end{eqnarray} 

\end{solution}

\section{Results and discussion}
\label{sec:results}

In this section, we will present the simulation framework (Section \ref{sec:sim_setup}) followed by illustrative and statistical evaluations of the deconvolution solutions (Section \ref{sec:results1}). Finally, in Section \ref{sec:results2}, we compare our key generation protocol to a benchmark method inspired by the literature by analyzing the generated traffic and the bit agreement prior to key generation. Note that the public discussion and the reconciliation phases will not be evaluated herein because the focus of this work is on the acquisition of the physical layer signals necessary for quantization as an alternative to higher level key distribution using the physical layer. Thus, operations that are subsequent to quantization do not influence the comparison of the protocols. 
  
\subsection{Simulation setup}
\label{sec:sim_setup}

After computing the \textit{s-signal} at the sampling resolution defined by $F_s = 10$ GHz, we simulate the pseudo-analog s-waveform $s_{AC}(t)$ by sinc-interpolation in order to obtain a simulation frequency of 100 GHz. The obtained waveform is next convolved with the true multipath channel $h_{AC}(t)$. Our tests have been performed for arbitrary realizations of CM1 channels corresponding to indoor line-of-sight IR-UWB links according to the IEEE 802.15.4a standard \cite{ieee802154a}. The CIR realizations have been normalized in energy and only the excess delay part has been considered by ignoring the absolute delays due to the inter-node distances.  

\paragraph{ML and MAP.} 
In terms of implementation, the least squares solutions of Eq. \eqref{eq:ml} and Eq. \eqref{eq:map} are found using \textsc{Matlab}\textregistered's linear least squares solver \textit{mldivide}, which employs the QR factorization and provides a solution $\mb{s}$ with the fewest possible non-zero components. 

\paragraph{MAP and MAP-CV.} 
As explained previously, we choose $\mb{P}=\mb{I_{N_s}}$ corresponding to a minimization of the signal energy since a prior matrix corresponding to a differential kernel $\mb{[1,-1]}^T$ would be adapted at higher sampling frequencies $F_s$ when the signal can be considered smooth. Regarding the cross validation routine, for each channel configuration we employed 20 random partitions with the common CV-splitting percentages (70\% training data and 30\% validation data). 

\paragraph{EM.}
The number of iterations of the EM algorithm is set to 30 after empirical observation of the convergence mechanism. The values of the initial parameters are arbitrarily set to $\epsilon = \gamma = 1$.

\subsection{Deconvolution performance}
\label{sec:results1}

In this section, we aim to describe the degradation produced by the imperfect channel estimation $\mb{\hat{h}_{CA}} \neq \mb{h_{CA}}$ for each deconvolution solution. To this end, we quantify the performance of the deconvolution solutions with the root mean squared error (RMSE) between the pseudo-analog noiseless received signal $(s_{AC}*h_{AC})(t)$ and the target signal $y_{BA}(t)$. Note that the receiver should perform additional windowing and synchronization operations since the sent \textit{s-signal} is longer than the needed observation window. For our results, we have performed preliminary idealized correlation-based signal-level synchronization in order to separate the aspects related to signal reconstruction from the synchronization issues, which do not fall in to the scope of the paper. 

First, we show an illustrative example using particular channel realizations $(h_{AB}, h_{AC})$ denoted as channel configurations ($\mc{C}$). Then, the aggregated results of RMSE over $5000$ channel configurations are given.

\subsubsection{Illustrative example}
\label{sec:illustrative}

A value of $\text{SNR} = 20$ dB is employed for the received signals $y_{BA}(t),y_{CA}(t)$ before channel estimation ($\mb{\hat{h}_{CA}}$) and \textit{s-signal} computation. No noise is added after the transmission of the \textit{s-signal} because for now we only focus on the reconstruction capability of the non-adjacent channel regardless of the noise conditions on the last involved link. 
Figures \ref{fig:comp_mapem}-\ref{fig:comp_mapcv} show a comparison between the target signal that should be reconstructed at C, $y_{BA}(t)$, and the noiseless received signal at C, $(s_{AC}*h_{AC})(t)$, for several deconvolution solutions: ML, MAP with two different weight parameters $\lambda \in\{1, 0.01\}$, EM initiated with $\epsilon = \gamma=1$, and MAP-CV.  

\begin{figure}[h]
\subfigure[ML solution (RMSE $=0.243$)]
{\label{fig:ex_ml}\includegraphics[scale=0.4]{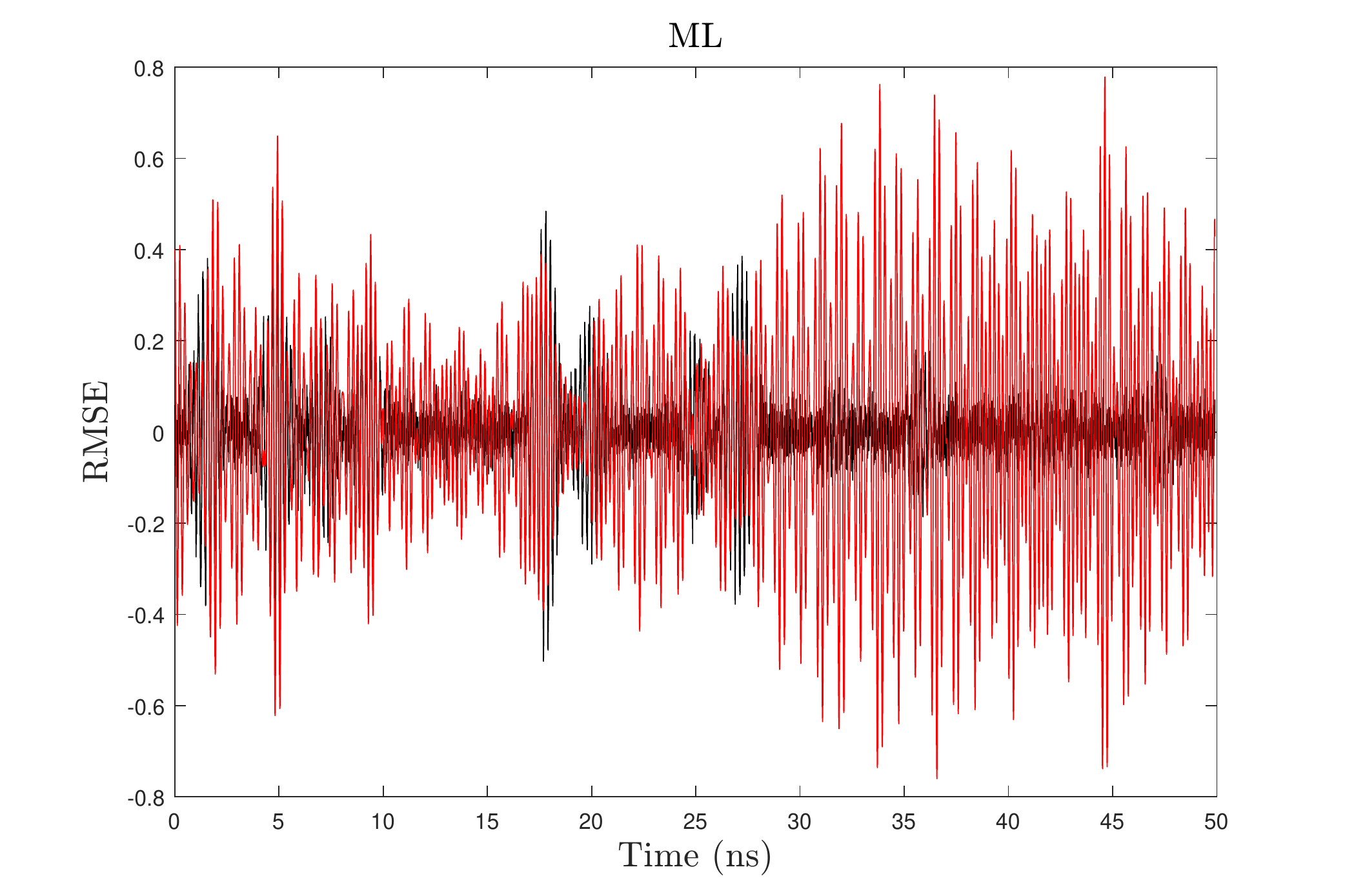}}
\subfigure[MAP solution $\lambda = 1$ (RMSE $=0.067$)]
{\label{fig:ex_map1}\includegraphics[scale=0.4]{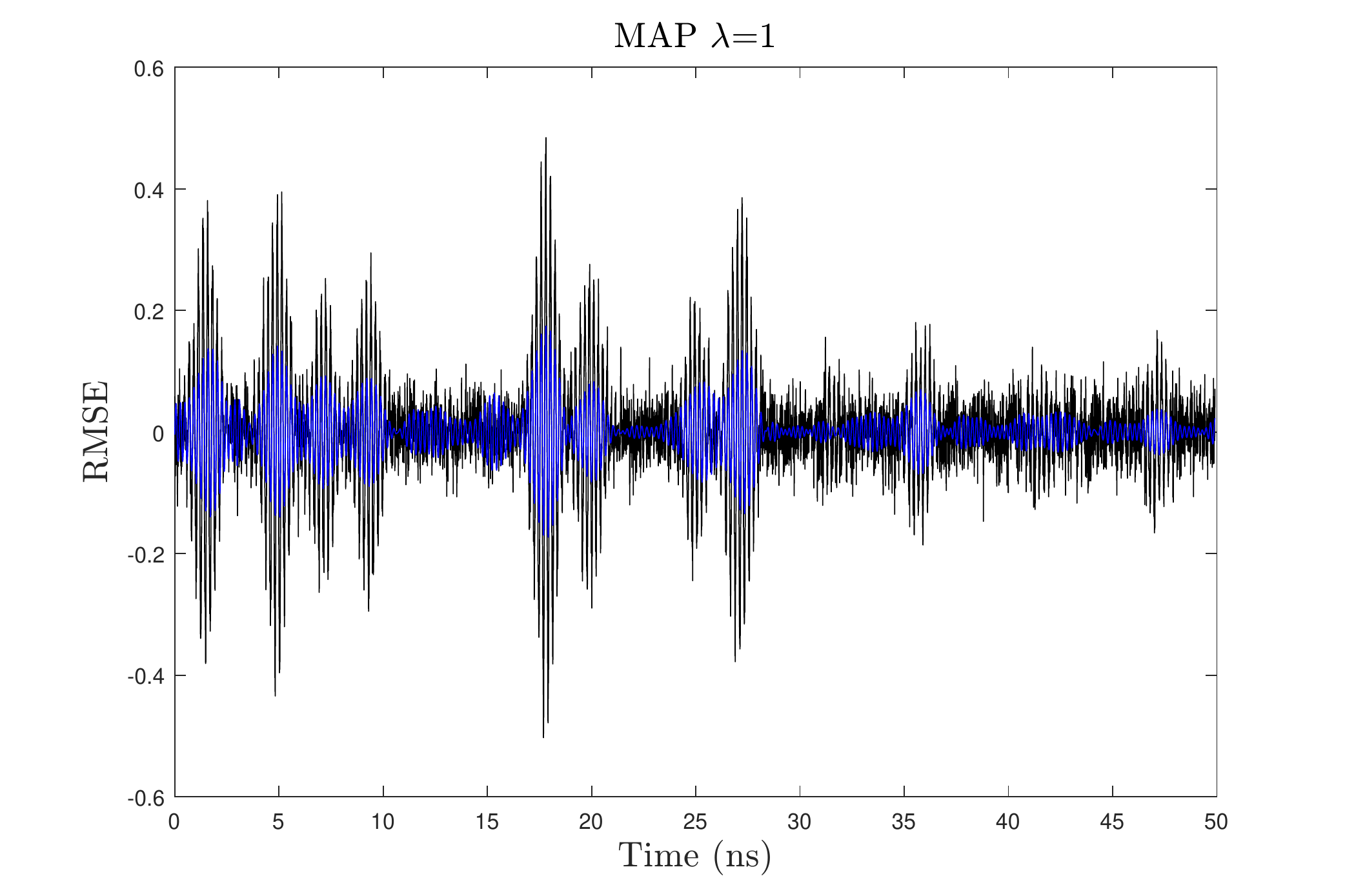}}
\subfigure[MAP solution $\lambda=0.01$ (RMSE $=0.055$)]
{\label{fig:ex_map001}\includegraphics[scale=0.4]{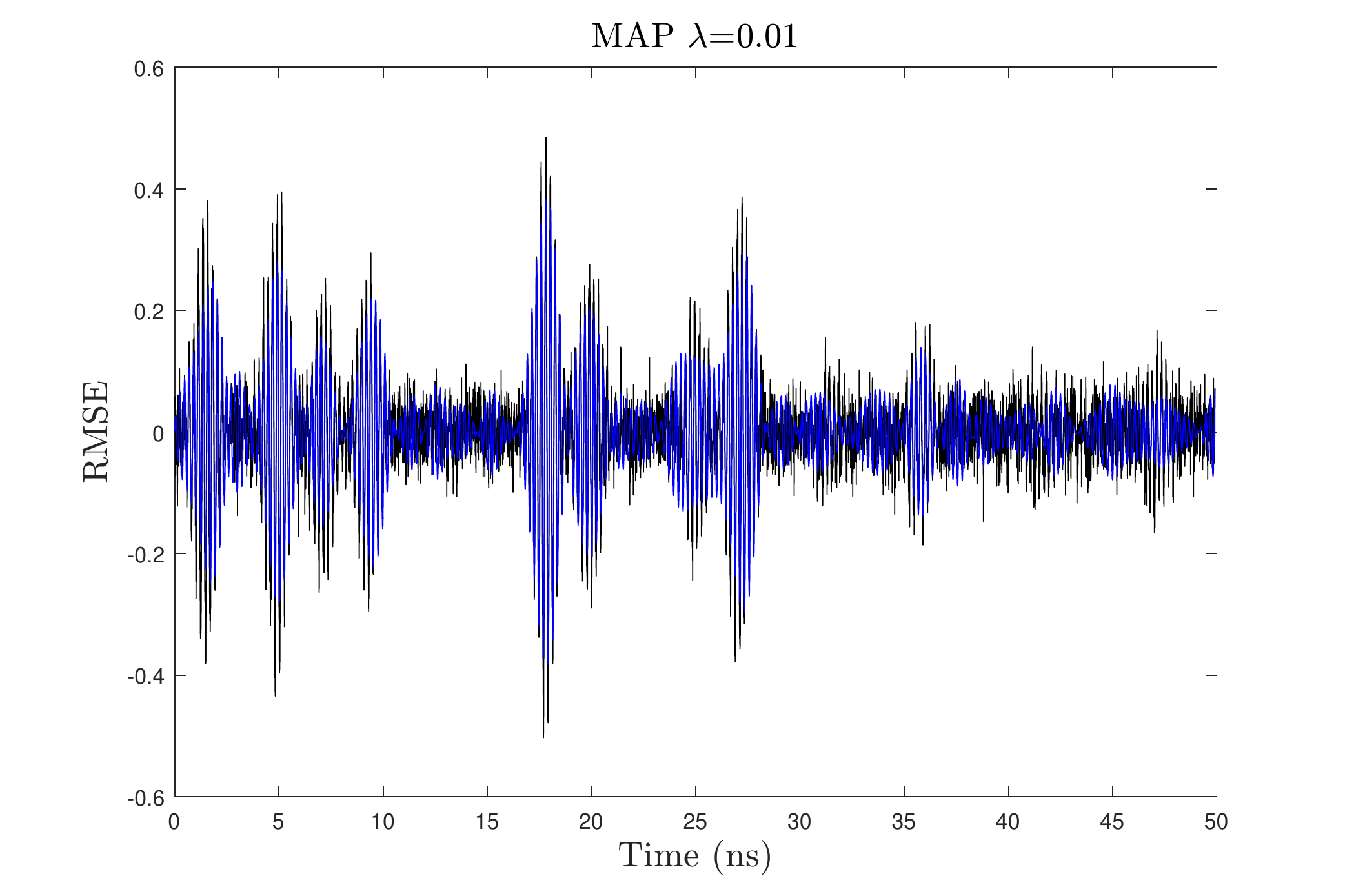}}
\subfigure[EM solution (RMSE $=0.058$)]
{\label{fig:ex_em}\includegraphics[scale=0.4]{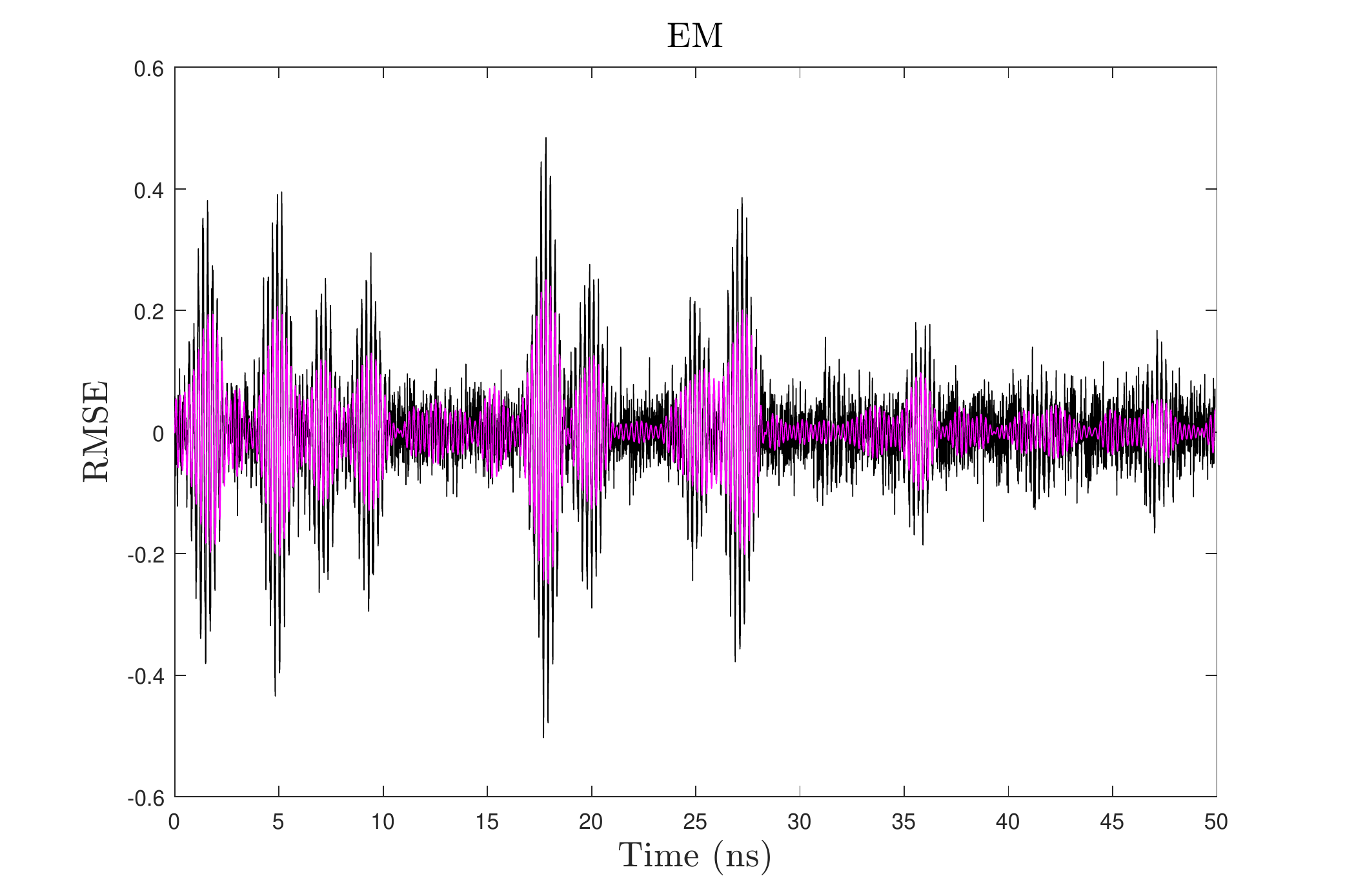}}
\caption{Comparison of various deconvolution solutions for a sample channel configuration $\mc{C}_0$ (target signal in black and received signal in colour)}
\label{fig:comp_mapem}
\end{figure}

\begin{figure}[h]
\subfigure[MAP-CV solution for $\mc{C}_1$ (RMSE $=0.061$)]
{\label{fig:ex_mapcv1}\includegraphics[scale=0.4]{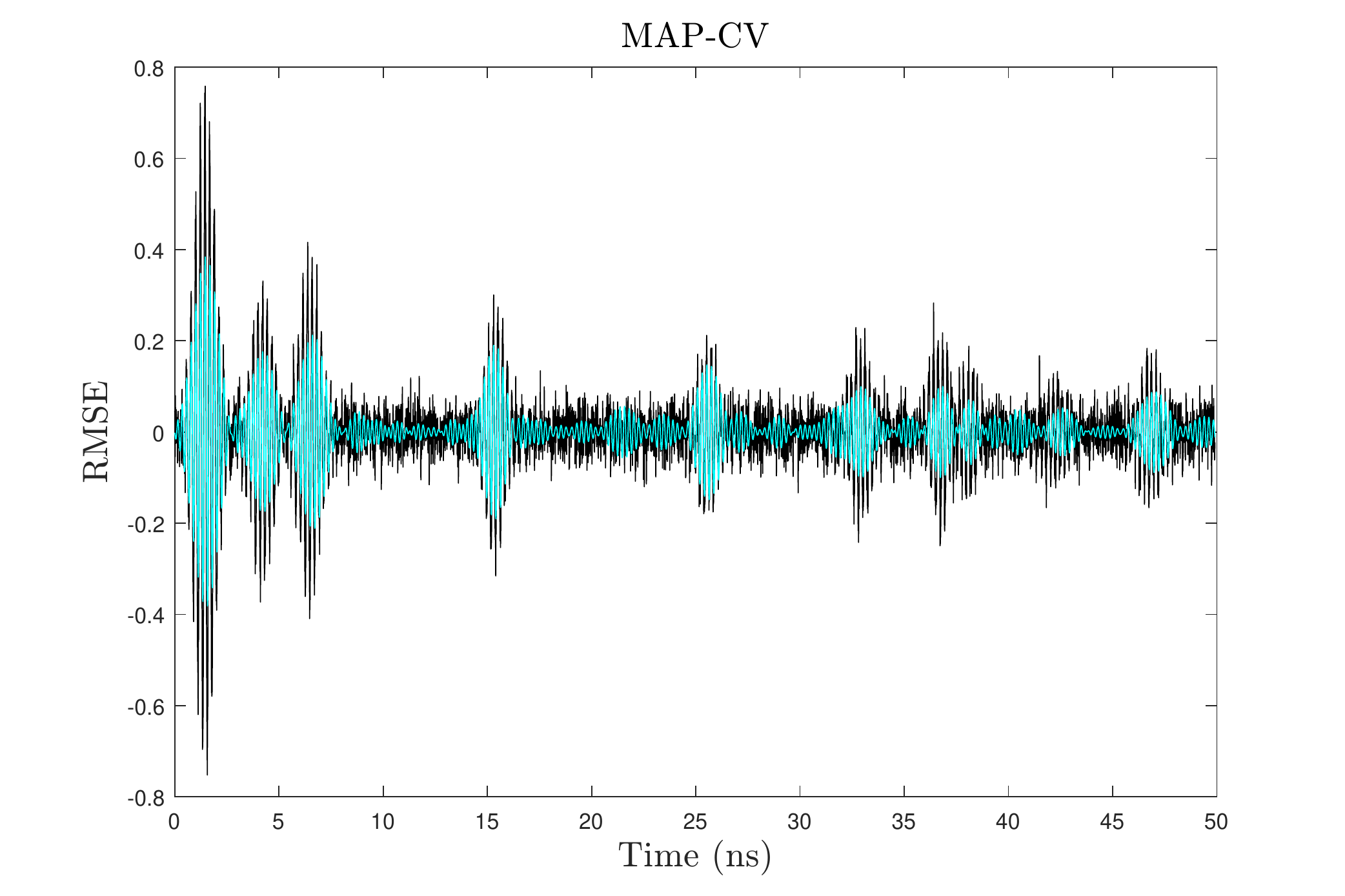}}
\subfigure[MAP-CV solution for $\mc{C}_2$ (RMSE $=0.102$)]
{\label{fig:ex_mapcv2}\includegraphics[scale=0.4]{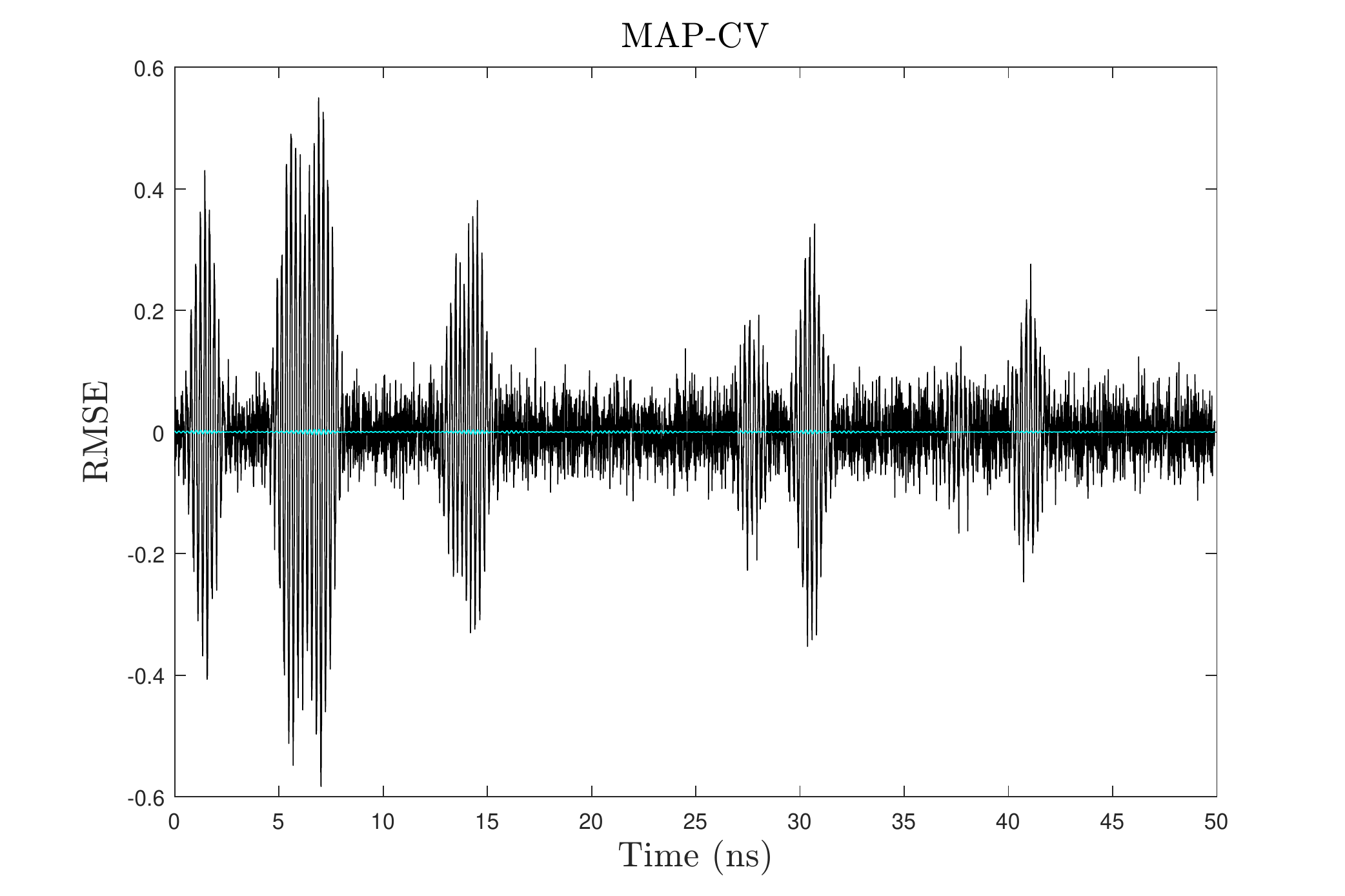}}
\caption{MAP-CV solutions for two sample channel configurations $\mc{C}_1$,$\mc{C}_2$ (target signal in black and received signal in colour)}
\label{fig:comp_mapcv}
\end{figure}

The signal instability observed for the ML solution in Figure \ref{fig:ex_ml} can be regarded as an over-fitting issue: the \textit{s-signal} is computed based on the imperfect channel estimation $\mb{\hat{h}}$ but the final performance depends on the unknown real channel $h(t)$. This means that even though the ML deconvolution solution is exact for the given data (channel estimates and target channel), it could behave unpredictably for slightly different real channel conditions (equivalent to variations in the channel estimates). Although it is not always the case, we have chosen a particular channel configuration in which the ML solution is unstable in order to give an intuitive explanation of the final results.  

Adding the penalty term using the MAP solution avoids the over-fitting. However, the performance is highly dependent on the weight $\lambda$ as shown in Figures \ref{fig:ex_map1}-\ref{fig:ex_map001}. 
We conclude that for this particular channel configuration the regularization works as intended at small $\lambda$ values, but degrades for higher $\lambda$ values, when the prior obtains too much weight and starts flattening the signal ($\mb{s} \rightarrow \mb{0}$).

\begin{figure}[h]
\subfigure[Generalization error for $\mc{C}_1$]
{\label{fig:ex_cvErr1}\includegraphics[scale=0.4]{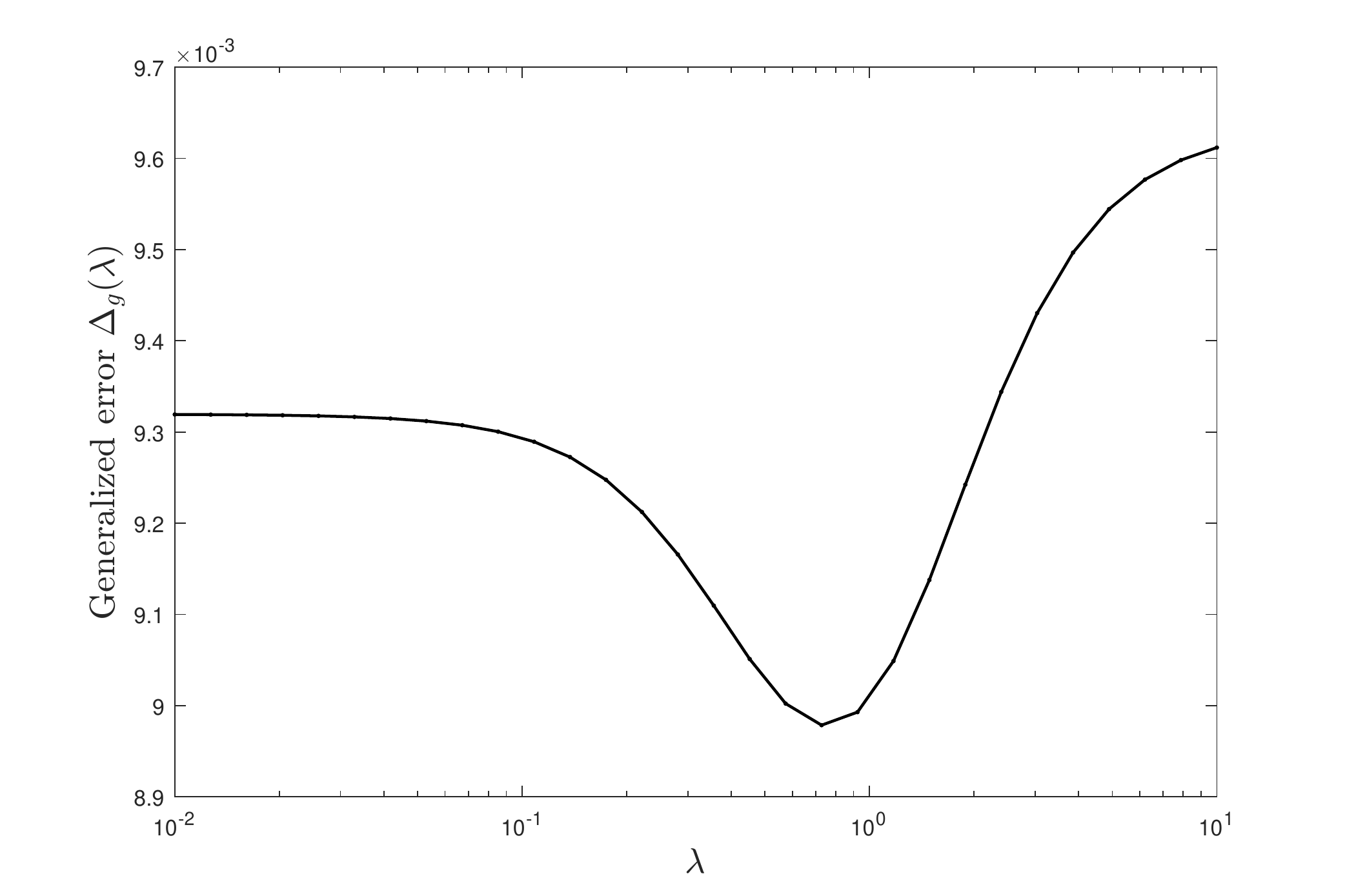}}
\subfigure[Generalization error for $\mc{C}_2$]
{\label{fig:ex_cvErr2}\includegraphics[scale=0.4]{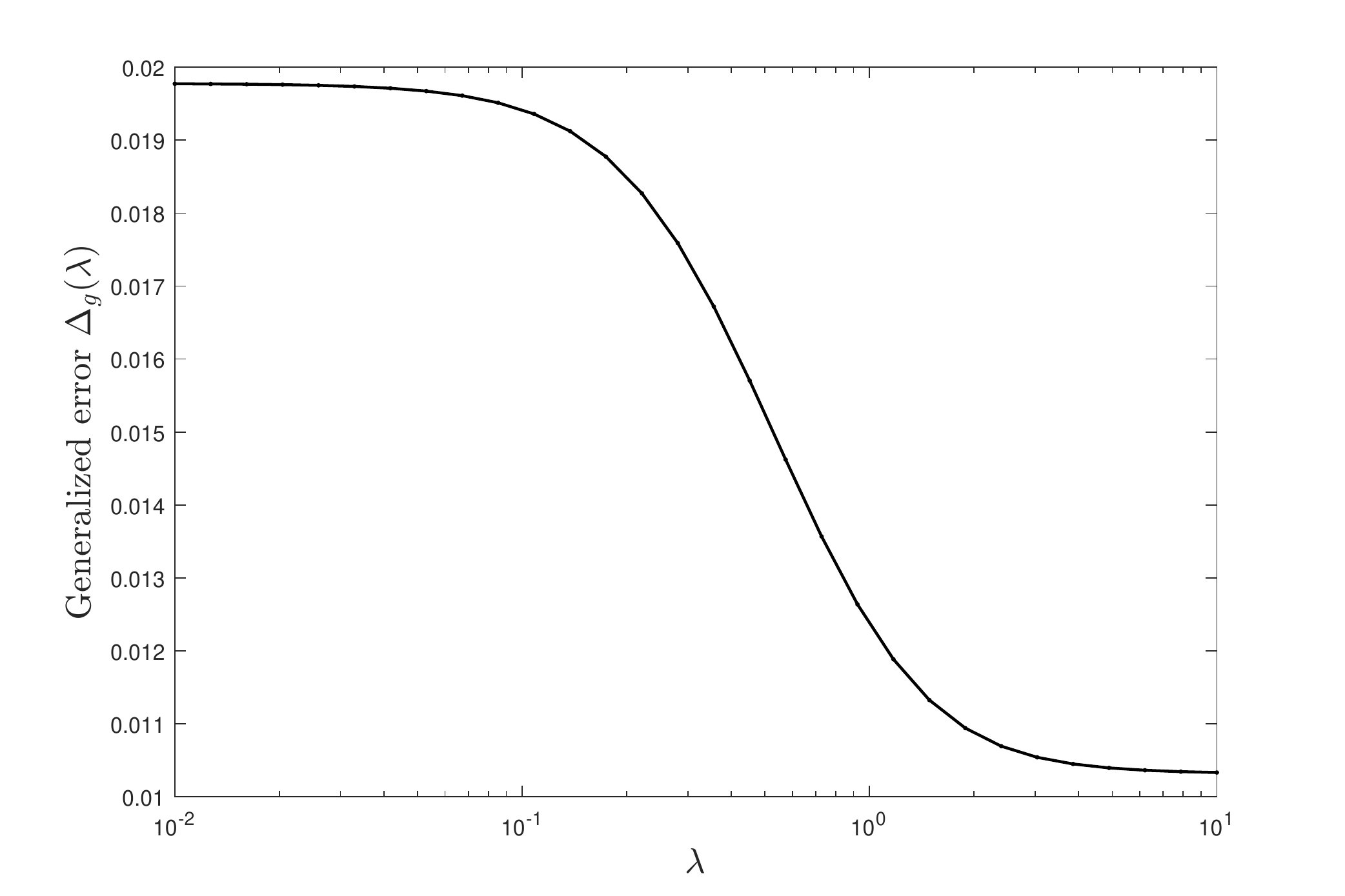}}
\caption{MAP-CV generalization error for two sample channel configurations $\mc{C}_1$,$\mc{C}_2$}
\label{fig:comp_cvErr}
\end{figure}

The solutions that include an automatic parameter setting are EM and MAP-CV. As illustrated for the channel configuration $\mc{C}_0$ in Figure \ref{fig:ex_em}, EM converges to an intermediate solution in terms of RMSE (e.g., equivalent $\lambda = 0.35$ for $\mc{C}_0$). On the contrary, the MAP-CV solution is rather unreliable as shown in Figure \ref{fig:comp_mapcv} for two extreme channel configurations $\mc{C}_1$ and $\mc{C}_2$. The generalization error (Figure \ref{fig:comp_cvErr}) does not always have a minimum and, therefore, model selection using CV does not have optimal performance for all channel configurations. This could be explained by the small size of the data set on which cross-validation is applied when the signals are sampled (500 samples). In certain cases, the validation samples can be insufficient or inadequate to ``generalize'' the estimation of the \textit{s-signal} to the unknown real channel conditions.

\subsubsection{Aggregated results}
\label{sec:aggregated}

First, in Figures \ref{fig:rmse_cdf_analog}-\ref{fig:rmse_cdf_norm} we present the cumulative distribution (CDF) of the RMSE of the unprocessed signals (as illustrated in the previous section) and of the normalized signals with respect to their maximum value. As before, a value of $\text{SNR} = 20$ dB is employed for the received signals $y_{BA}(t),y_{CA}(t)$ before channel estimation ($\mb{\hat{h}_{CA}}$) and \textit{s-signal} computation and no noise is added after the transmission of the \textit{s-signal}. Signal normalization before computing the RMSE is performed in order to look beyond the flattening effect of the regularization. However, the classification of the various deconvolution solutions with respect to their performance is unchanged when normalizing the signals. This confirms the fact that the differences are not just a consequence of the different flattening factors but of the structural differences of the signals issued from each method. The same conclusion is given by the cumulative distribution of the correlation coefficients in Figure \ref{fig:corr_cdf}: the correlation coefficients are used instead of the RMSE for the same signals in order to provide another means of comparison.

\begin{figure}[h]
\subfigure[Cumulative distribution RMSE]
{\label{fig:rmse_cdf_analog}\includegraphics[scale=0.4]{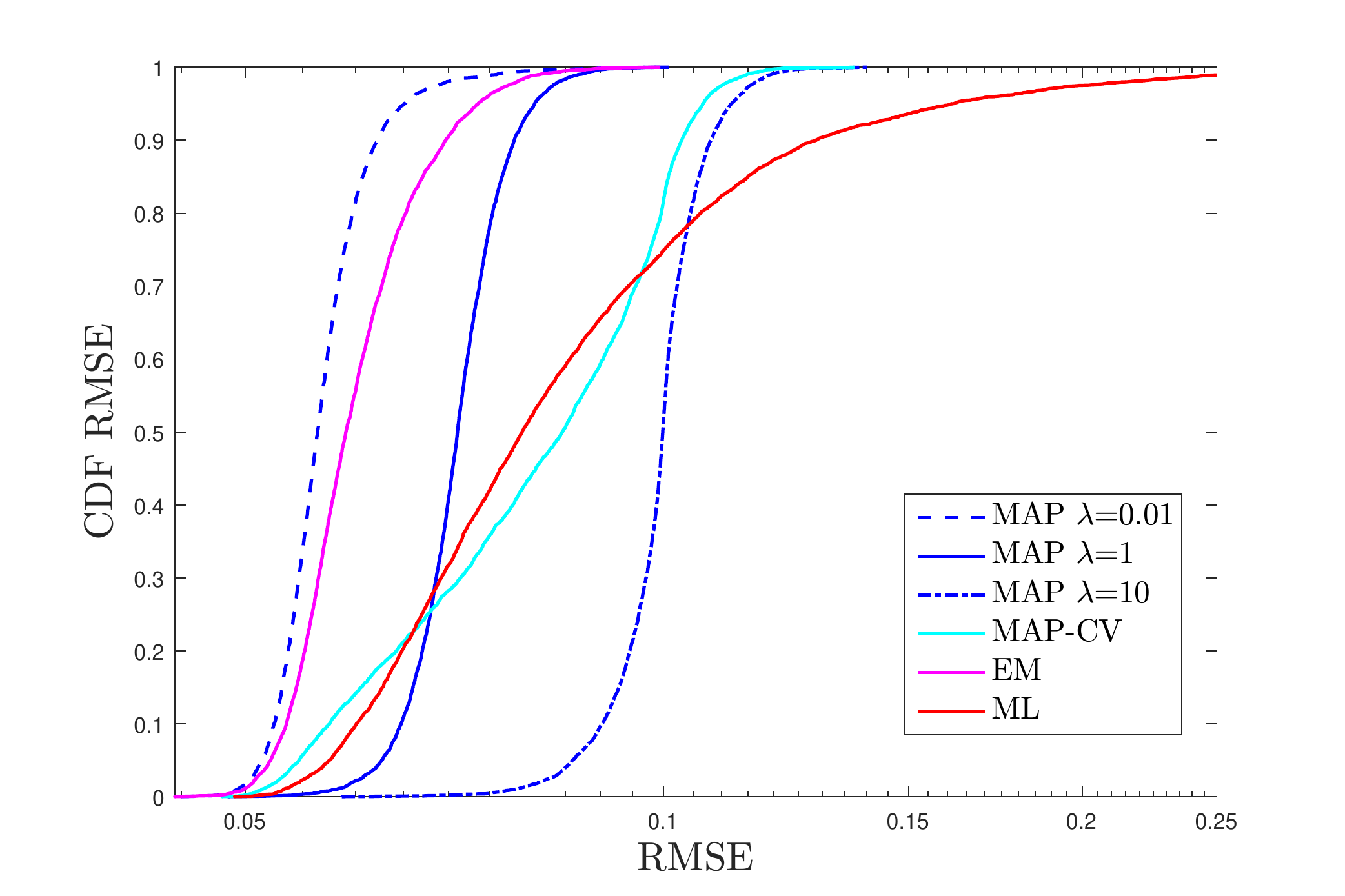}}
\subfigure[Cumulative distribution RMSE (normalized signals)]
{\label{fig:rmse_cdf_norm}\includegraphics[scale=0.4]{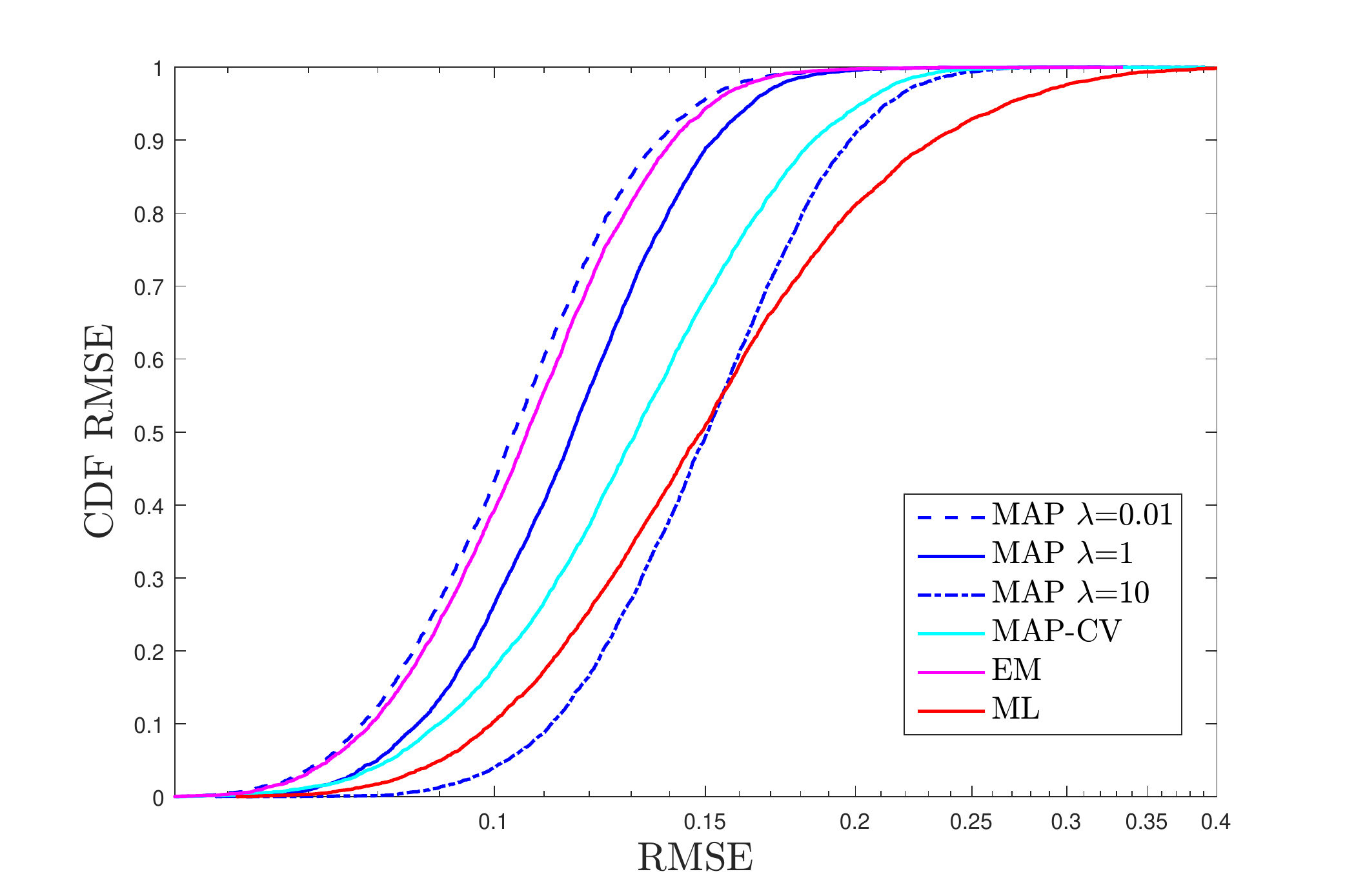}}
\caption{Aggregated RMSE results (noise before estimation only)}
\label{fig:rmse_cdf}
\end{figure}

\begin{figure}[h]
\begin{center}
\includegraphics[scale=0.4]{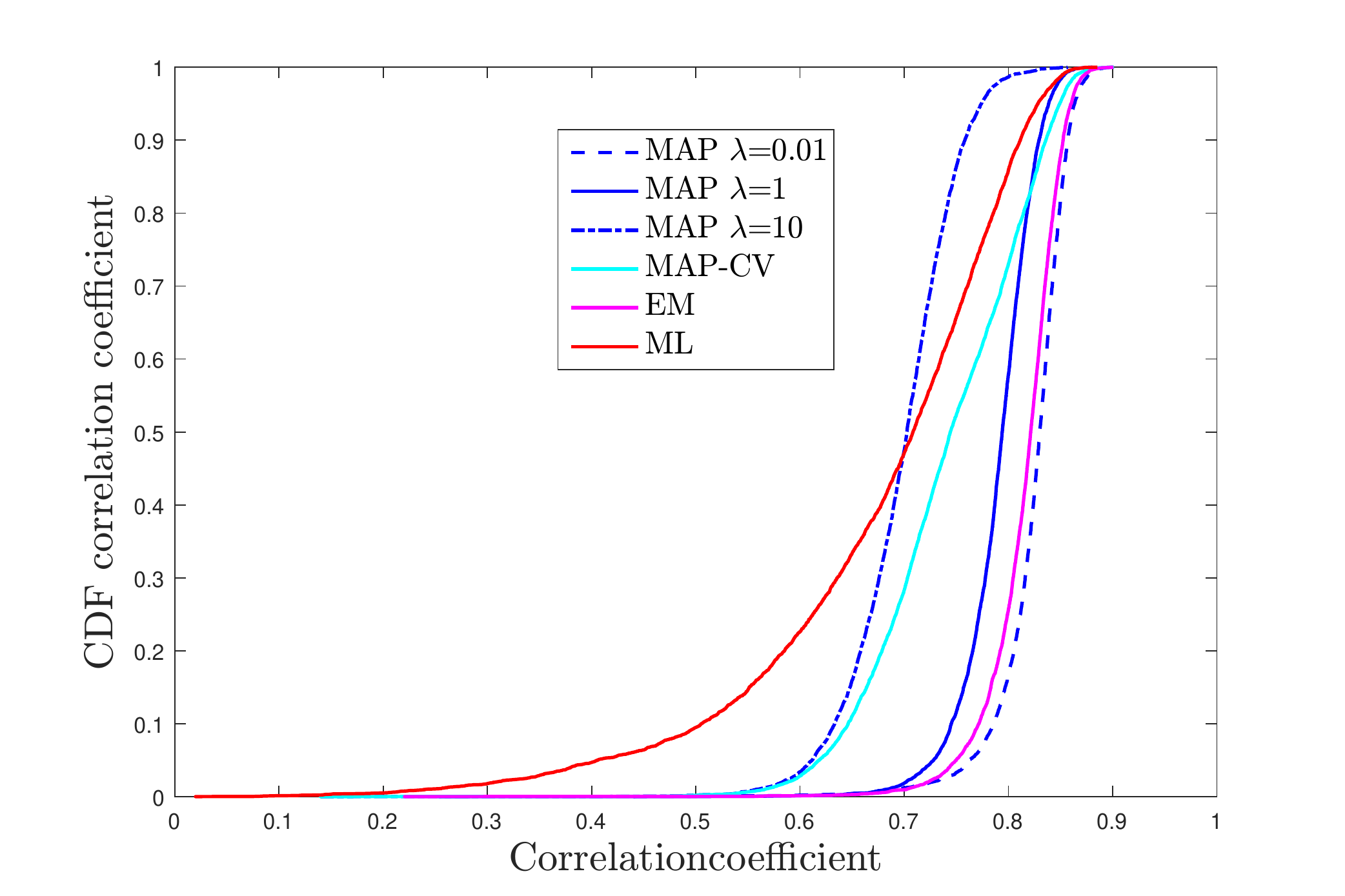}
\caption{Correlation results (noise before estimation only)}
\label{fig:corr_cdf}
\end{center}
\end{figure}

On a relative scale, the ML, MAP-CV and some manually tuned MAP solutions (i.e., at large $\lambda$ values) give the poorest results. It can be also be inferred that at low $\lambda$ values, the MAP estimator has slightly higher performance than the EM solution. However, it is relatively ``unfair'' to compare a solution that gives an estimation of the signal and of the model parameters (EM) with one that takes as an input a favorable manually tuned parameter and provides only an estimation of the \textit{s-signal}. 

Then, in Figure \ref{fig:rmse_cdf_noise}, the results corresponding to entirely noisy scenarios are given. An entirely noisy scenario is defined by adding noise before channel estimation (as before) and after \textit{s-signal} transmission corresponding to a given SNR. As expected the RMSE becomes larger for low SNR values. From the RMSE values strictly it can be inferred that at lower SNR (10 dB), the EM-based solution outperforms the MAP solution with manual parameter $\lambda = 0.01$. However, this is not the case in terms of correlation for lower SNR as it can be seen from Figure \ref{fig:corr_cdf_noise10}. 
 
\begin{figure}[h]
\subfigure[Cumulative distribution RMSE ($\text{SNR}=20$ dB)]
{\label{fig:rmse_cdf_noise20}\includegraphics[scale=0.4]{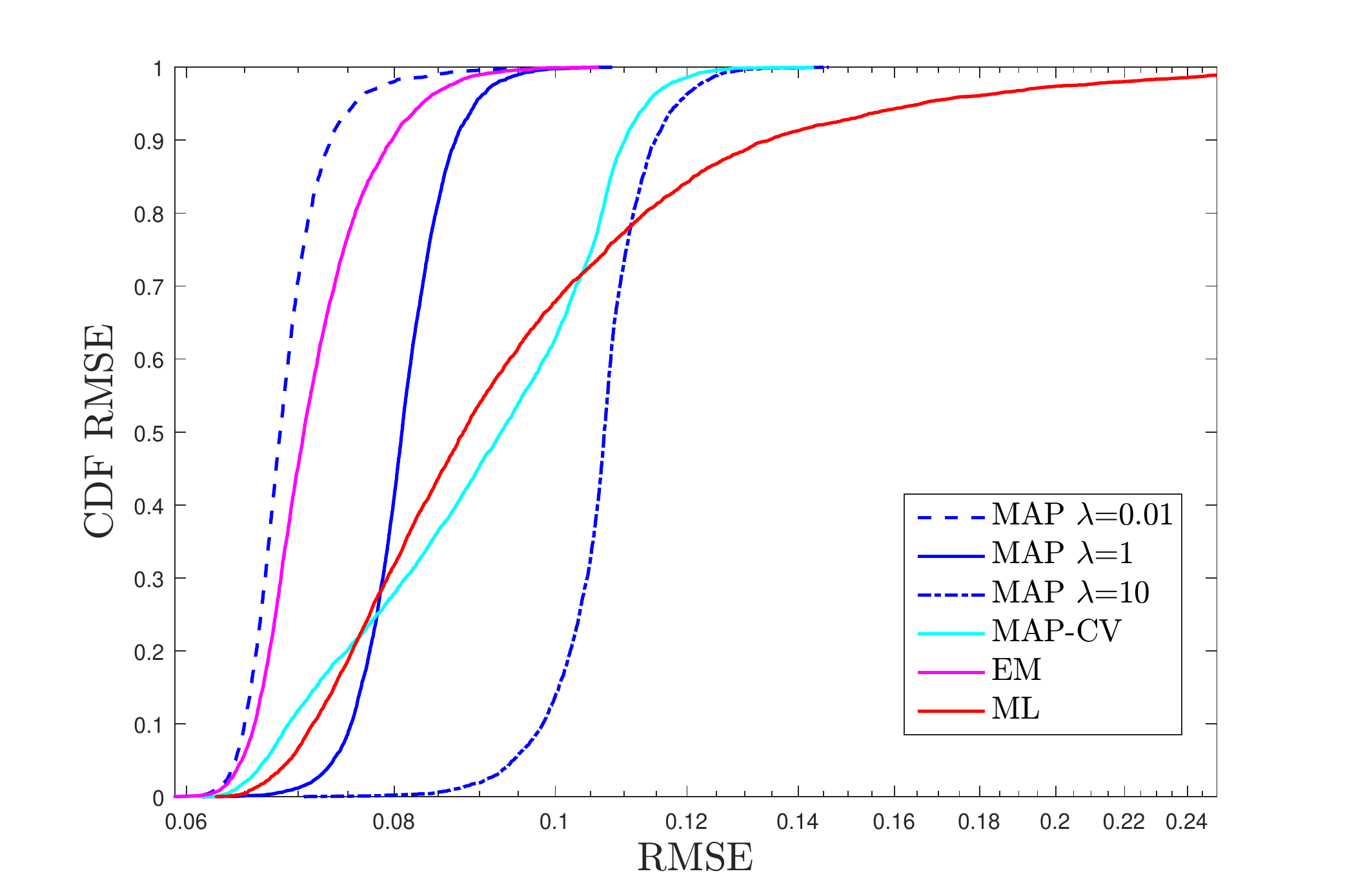}}
\subfigure[Cumulative distribution RMSE ($\text{SNR}=10$ dB)]
{\label{fig:rmse_cdf_noise10}\includegraphics[scale=0.4]{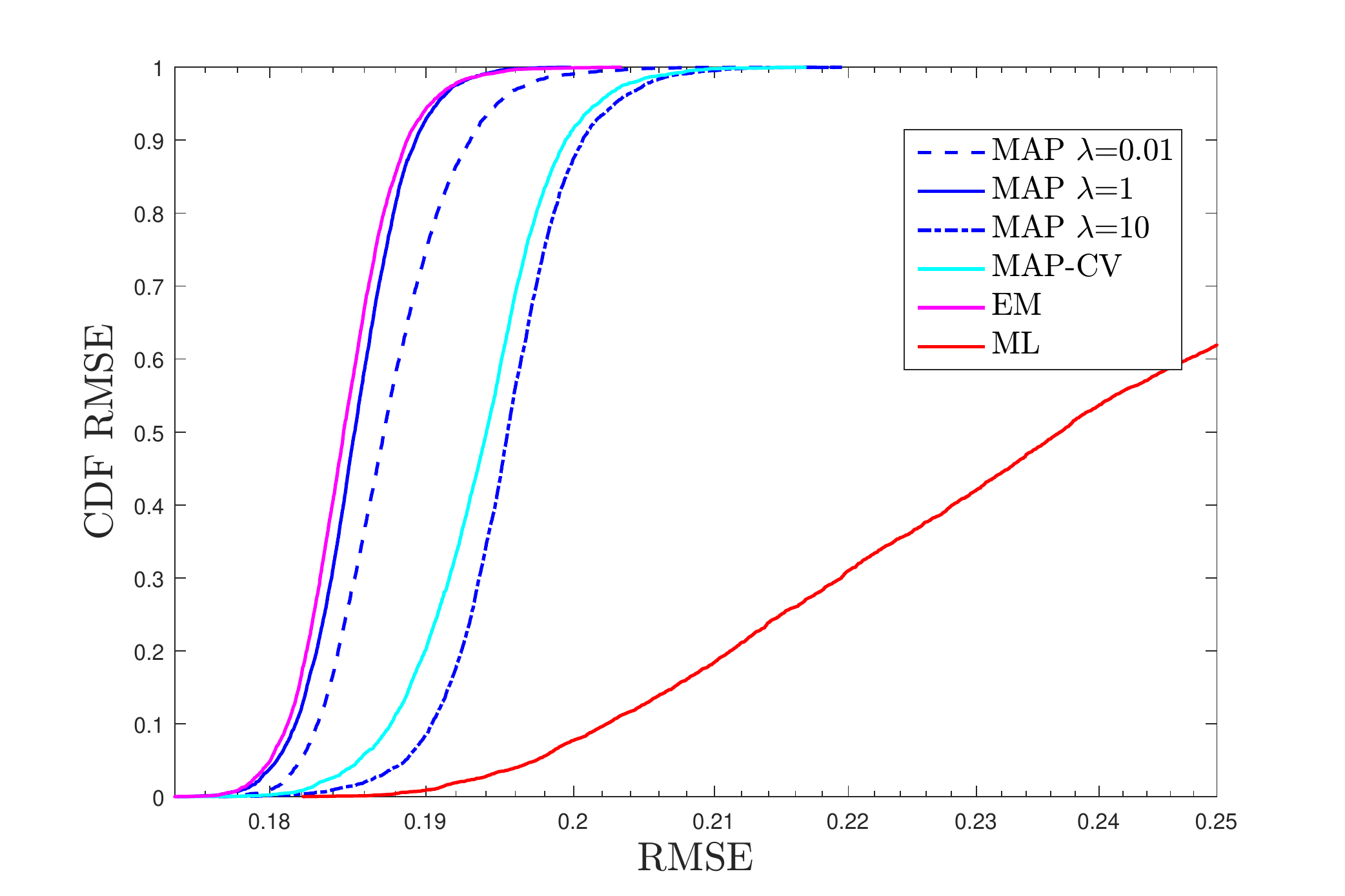}}
\caption{Aggregated RMSE results in entirely noisy scenarios}
\label{fig:rmse_cdf_noise}
\end{figure}

%
\begin{figure}[h]
\subfigure[Correlation coefficients ($\text{SNR}=20$ dB)]
{\label{fig:corr_cdf_noise20} \includegraphics[scale=0.4]{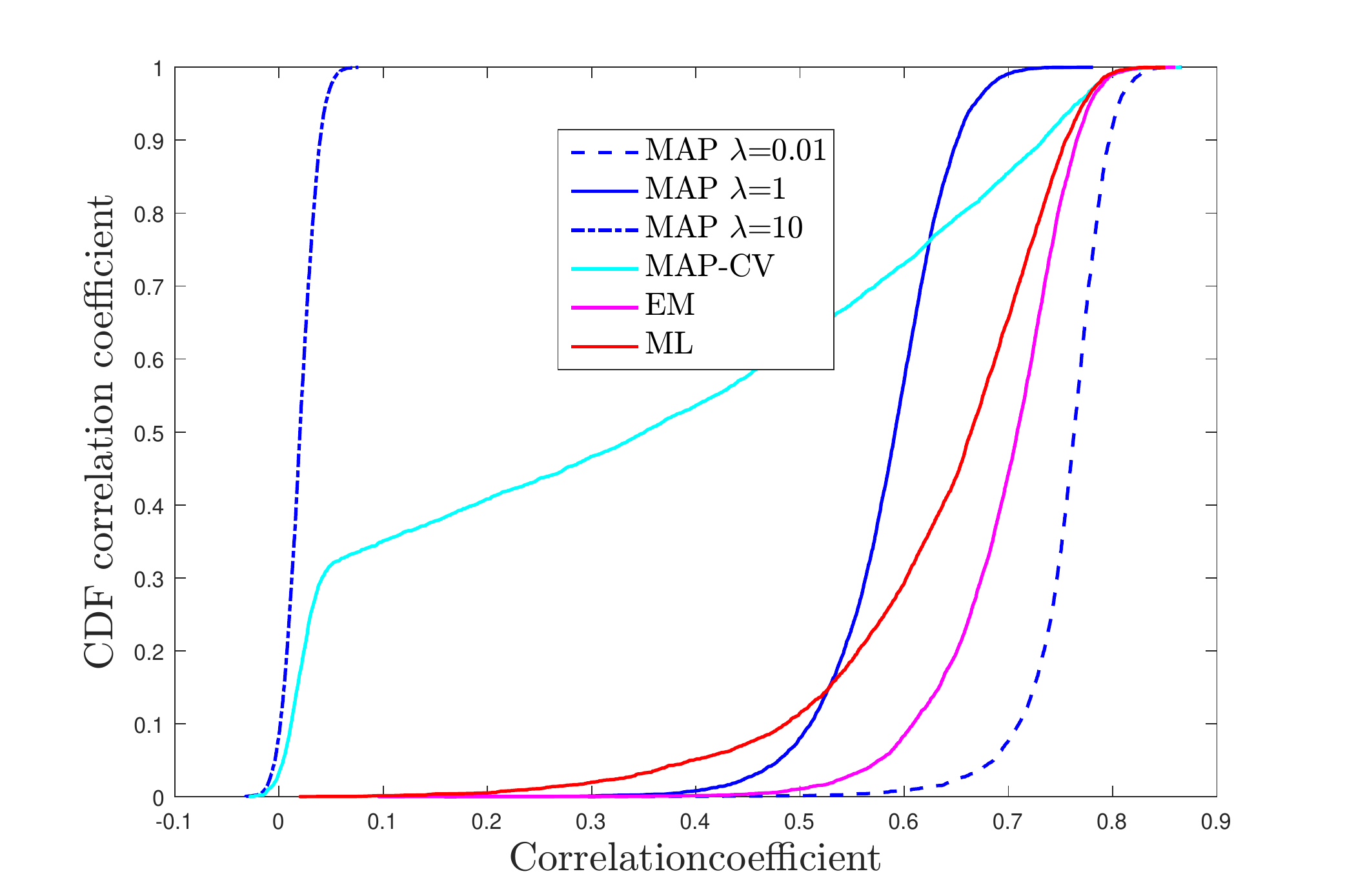}}
\subfigure[Correlation coefficients ($\text{SNR}=10$ dB)]
{\label{fig:corr_cdf_noise10} \includegraphics[scale=0.4]{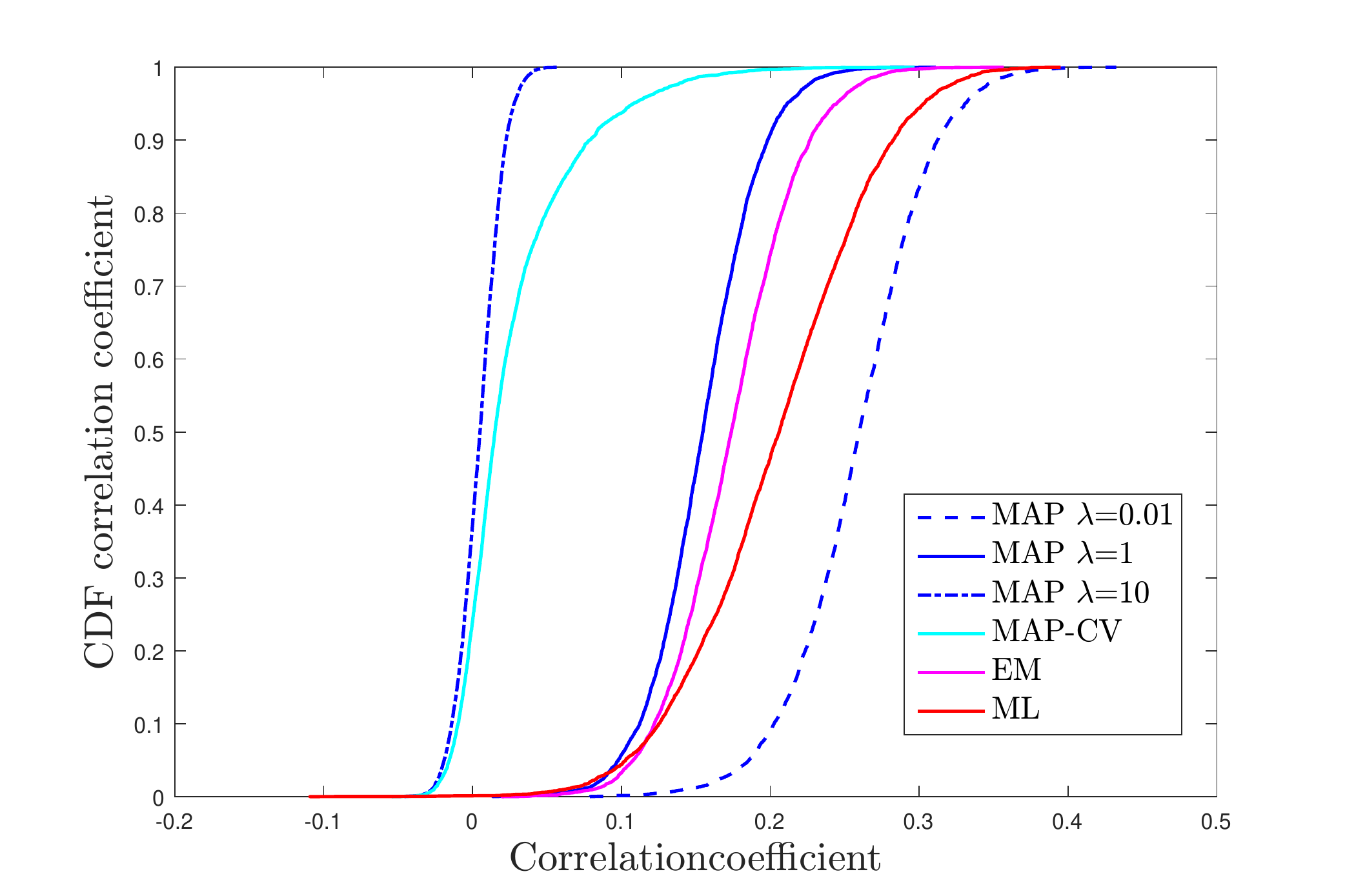}}
\caption{Correlation results in entirely noisy scenarios}
\label{fig:corr_cdf_noise}
\end{figure}

To sum up, the performance achieved by the EM solution with the considered model are comparable or inferior to a manual tuning of the trade-off parameter (i.e., choosing a constant very small prior weight irrespective of the channel configuration). This might suggest that the incurred complexity of the EM solution is unnecessary. However, the EM framework presents several advantages that cannot be achieved by any of the alternative solutions:

\begin{itemize}
\item computation of the uncertainty related to the \textit{s-signal} estimation (see Appendix \ref{app_em}), potentially exploitable by the subsequent quantization layer. 
\item comparison between different prior models in a more complex scenario (e.g., another example of prior could be the information about arbitrary regulatory emission masks compulsory for over-the-air transmissions).
\item joint \textit{s-signal} estimation and channel estimation ($\mb{\hat{h}_{CA}}$) if supplementary information about the channel estimation error model is available. In this case, the hidden variable would be the pair $(\mb{s},\mb{h})$ and the resulting \textit{s-signal} is expected to take into account the uncertainty in the channel estimation that is responsible for the observed artifacts.  
\end{itemize}

\subsection{Protocol performance}
\label{sec:results2}

The proposed physical layer key generation protocol (CKG) is compared to a benchmark physical layer key distribution method (CKD) in terms of exchanged packets (Section \ref{sec:traffic}) or mean key lengths (Section \ref{sec:ba_len}). We also report the mean bit agreement/matching between the acquired bit sequences of the three parties for CKG before reconciliation (Section \ref{sec:ba_len}). The details concerning the quantized signal are presented in Section \ref{sec:qsignal}. 

\subsubsection{Physical layer group key distribution (CKD)}
One way of extending the point-to-point source model for key generation to several links and to a group key is to generate pairwise keys on each link and then distribute a group key (similarly to \cite{Ye07}). CKD can be achieved in two phases.

\begin{itemize}
\item Each node generates a pairwise symmetric key with each of its neighbors based on the properties of the radio channel. The employed key generation method is the same as for our protocol. Contrary to CKG, the key generation for CKD is performed separately for each pair of nodes using only their corresponding channels.    
\item A group key, generated by a lead node using a random number generator, is propagated in the network by XOR-ing operations using the previous single-link keys. The security of the scheme relies therefore on the single-link keys.  
\end{itemize}

\subsubsection{Generated traffic}
\label{sec:traffic}

Table \ref{table:packets_N} compares the number of packets needed for CKD and CKG in a full mesh network of $N$ nodes when CKG is achieved with only one cooperator for each non-adjacent channel. Broadcasting, denoted by $*$ in the table, is used when possible (i.e., pairwise channel probing, dropping and error-correction for CKG). For CKG, a protocol based on the nodes' IDs can be implemented in order to send the necessary \textit{s-signals} such that each node receives all its the non-adjacent channels only once (e.g., each node $i$ sends to a node $j$ the \textit{s-signals} corresponding to the channels $i-k$ with $k>i, k \neq j$). Then, the total number of exchanged packets for \textit{s-signal} transmission is the difference between the total number of channels in a full mesh network $\mylp \frac{N(N-1)}{2}\myrp$ and the number of adjacent channels ($N-1$). The key distribution for CKD in a full mesh consists of $N-1$ packets containing the encrypted group key sent by the lead node to each other node. A lead node is necessary for both CKD and CKG but for different reasons (reconciliation for CKG or distribution for CKD). We consider thus a generic protocol that requires $x$ packets for setting a lead node in a network ($x$ can be 0 if it is established that the lead node is the one with the smallest id).  

Because of the cooperative channel probing phase, CKG is less scalable than CKD for $N \geq 5$. However, larger values of the number of nodes make it difficult to obtain a full mesh scenario and to establish a key during the channel coherence time for both methods and they are thus less practical. 

\begin{table}[ht]
\caption{Exchanged packets for one group key generation/distribution in a full mesh of $N$ nodes}
\centering
    \begin{tabular}{| l | c | c |}
    \hline
    Phase/Method & CKD & CKG\\ \hline \hline
    Pairwise channel probing & $N^*$ & $N^*$ \\ \hline
		Transmission \textit{s-signals} & - &  $N \mylb \frac{N(N-1)}{2} - (N-1) \myrb$ \\ \hline
		Reconciliation : set lead node & - & $x$ \\ \hline
		Reconciliation : dropping & $N(N-1)$ & $N^*$ \\ \hline
    Reconciliation : error-correction & $\frac{N(N-1)}{2}$ & $1^*$ \\ \hline
		Key distribution: set lead node & $x$ & - \\ \hline
		Key distribution & $N-1$ & - \\ 
    \hline
		\hline
		Total & $O(N^2) + x $ & $O(N^3) + x $ \\ 
		\hline
    \end{tabular}
		\label{table:packets_N}
\end{table}

\subsubsection{Quantized signal}
\label{sec:qsignal}

For the following simulations, we consider again $N=3$. The reciprocal CIRs corresponding to the three links are generated independently using the IEEE 802.15.4a statistical model for LOS indoor environments (CM1). The transmitted pulse for initial channel probing $p(t)$ has a bandwidth of 1 GHz (defined at -10 dB of the Power Spectral Density) and a center frequency at 4.5 GHz. The duration of the observation window is set at $T_w=50$ ns. 
For CKG, the sampling frequency for CIR estimation and for the computation of the \textit{s-signals} is set at $F_s = 10$ GHz as before. 
The \textit{s-signal} is computed using the MAP solution with the regularization parameter set manually to $\lambda = 0.01$ dB and the obtained \textit{s-signal} is then filtered to conform with the required signal bandwidth and central frequency. 

The input signal for quantization has a sampling frequency of $F_p = 1/T_p$ and it is normalized with respect to the minimum and the maximum values in order to obtain signal samples between 0 and 1 for all nodes. An example of such a signal issued from link [A-C] and seen in A, B (based on the reception of an \textit{s-signal}), and C is provided in Figure \ref{fig:ex_qSignals}. For simplicity reasons, we choose a two-bit uniform quantization with corresponding cells \{(0-0.25), (0.25-0.5), (0.5-0.75), (0.75-1)\} and a Grey dictionary (\{``00'', ``01'', ``11'', ``10''\}). The employed guard-bands (GB) vary between 0 and 0.1 around the borders of the quantization cells.  

\begin{figure}[h]
\centering
\includegraphics[scale=0.8]{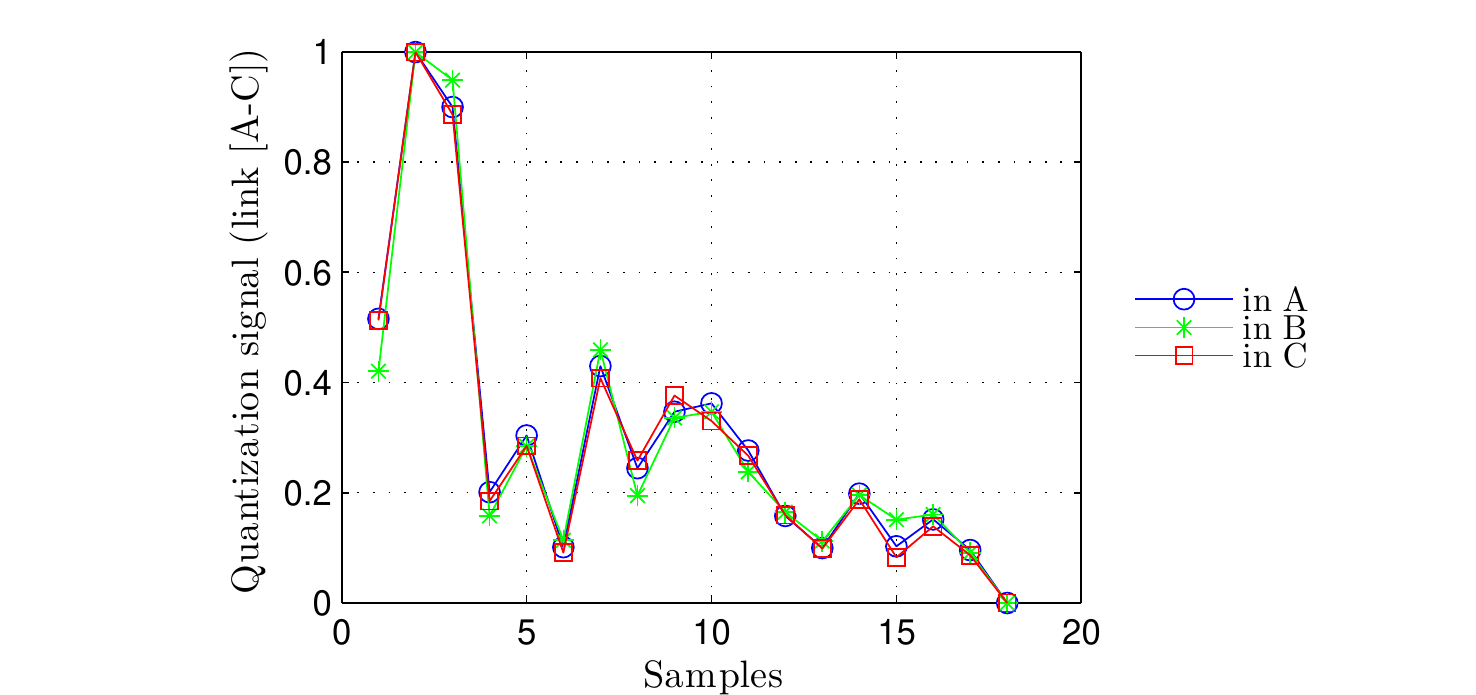}
\caption{Typical input quantization signal seen by the three nodes (SNR$=20$ dB)}
\label{fig:ex_qSignals}
\end{figure}

\subsubsection{Bit generation performance}
\label{sec:ba_len}
The two protocols are compared in terms of the number of generated bits after reconciliation for the same number of over-the-air packets (140 packets corresponding to 10 protocol rounds for CKD and 15 protocol rounds for CKG). As a penalty term, the keys that are not equal after reconciliation are considered to have length 0. 
Averaging is performed over 500 channel configurations, each configuration comprising three channel realizations corresponding to the three pairwise links. As explained in Remark \ref{rem:independency}, the channels are independent from one round to another meaning that the bits can be concatenated without introducing intra-key correlated binary patterns. 

\begin{figure}[h]
\centering
\includegraphics[scale=0.4]{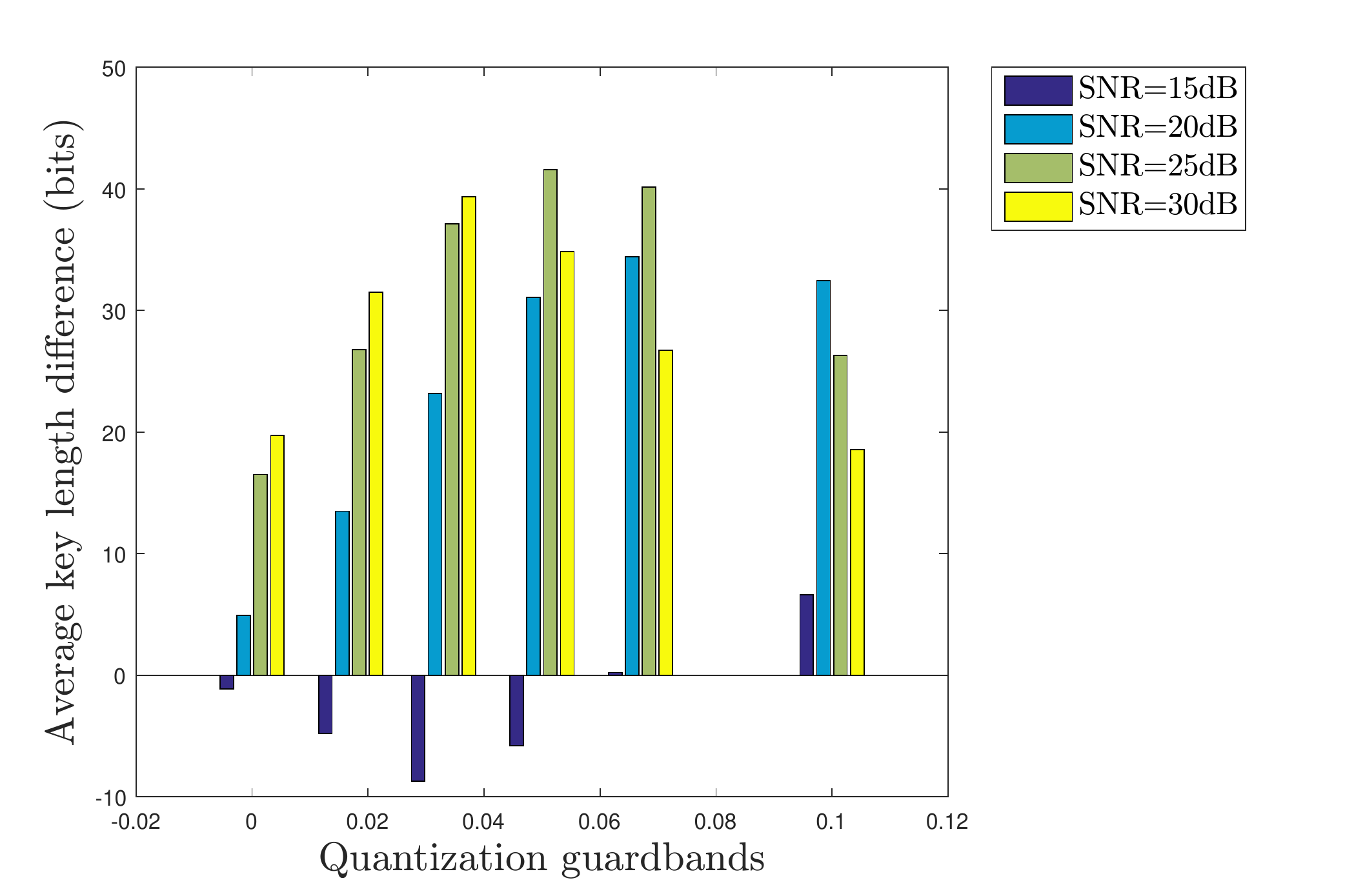}
\caption{Average key length difference CKG-CKD (key gain of CKG w.r.t. CKD)}
\label{fig:difflength}
\end{figure}

\begin{figure}[h]
\subfigure[CKG]
{\label{fig:backg} 
\includegraphics[scale=0.4]{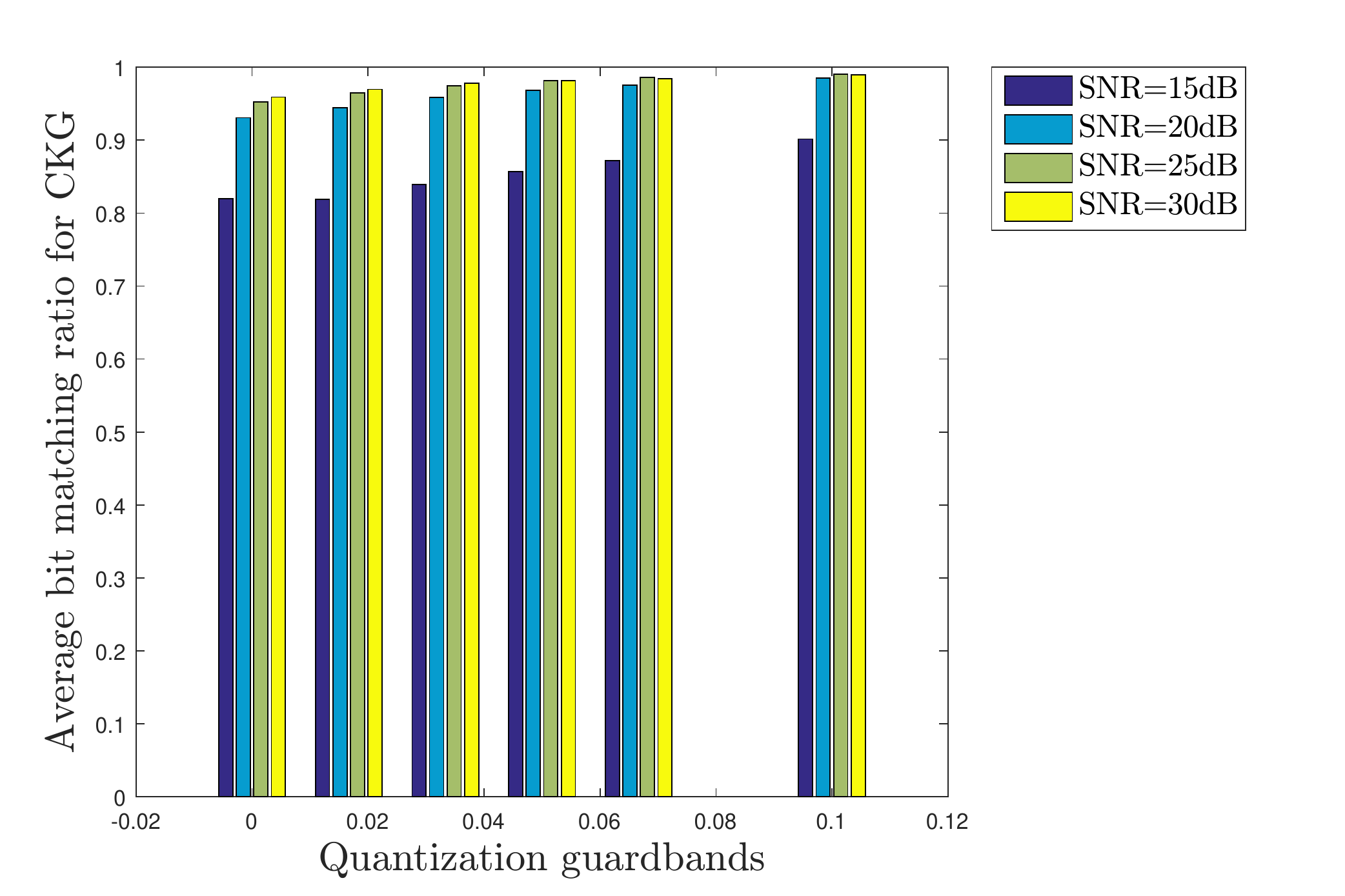}}
\subfigure[CKD]
{\label{fig:backd}
\includegraphics[scale=0.4]{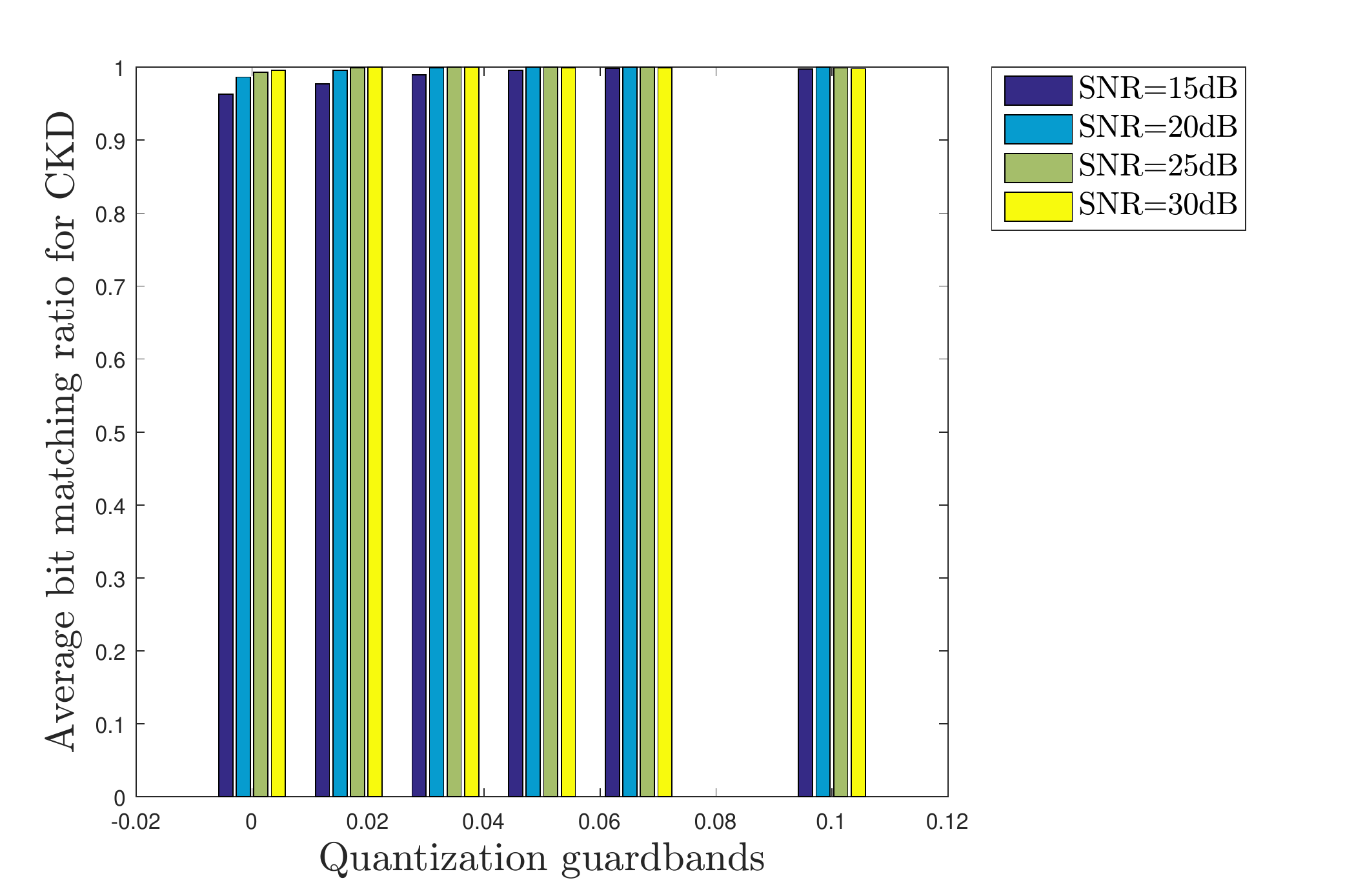}}
\caption{Average bit matching before reconciliation}
\end{figure}

From Figure \ref{fig:difflength}, we conclude that CKG is usually more advantageous in terms of key length, especially at higher SNR. The advantage is maximized at an optimal value of the guard-bands. This is due to the fact that before quantization, the CKG method does not have direct access to the non-adjacent channel measurements that are obtained after a deconvolution operation and another noisy transmission. This degrades the reciprocity of the final samples and leads to more keys that are not agreed upon for CKG at lower SNR or smaller guard-bands. 

However, it can be verified that CKG achieves a relatively high bit matching ratio between the three keys generated at the participating nodes (Figure \ref{fig:backg}). The bit matching ratios for the CKD method (Figure \ref{fig:backd}), although higher than those for CKG, do not present an interest in this study because they can only be computed for pairwise keys and represent solely the single-link reciprocity performance of the system.

\section{Conclusion}
\label{sec:conclusion}

In this paper we have investigated an alternative method for generating secret group keys using the physical layer in IR-UWB systems without relying on classical pairwise key agreement and distribution. For each node, we exploit the concatenation of adjacent and non-adjacent links in a mesh network in order to gather more entropy before the quantization process. 

Our contribution focuses on the parameterized estimation of specific \textit{s-signals} emitted by cooperative nodes in order to \textit{whisper} non-adjacent channel information to their neighbors. We investigate the accuracy of the non-adjacent signal reconstruction phase at the reception side, which is an important factor for subsequent key generation. Consequently, we describe and test two methods for joint maximum \textit{a posteriori} signal estimation and parameter specification. The first one is based on a cross validation technique with the aim of choosing an optimal value for the MAP deconvolution solution, whereas the second one applies the expectation maximization algorithm to obtain a joint estimation of the required signal and of the parameters of the employed statistical model. We conclude that the MAP-CV solution is not robust enough while MAP-EM has stable performance comparable to an advantageous manual tuning method for the MAP parameter. Furthermore, the most attractive feature of the EM model lies in the possibility of extending it to incorporate the channel estimation uncertainty prior to \textit{s-signal} inference. We also analyze the protocol scalability and show the advantages of our method compared to physical layer key distribution in terms of traffic overhead for small cooperative groups and key length for certain parameter configurations. 

Further studies should look into the complexity issues of the \textit{s-signal} generation from two perspectives: software (execution time, band matrix inversions, iterations, etc.) and hardware (programmable transmitters \cite{TCR}, signal dynamics, non-linearity, polarity, etc.). Also, from a key generation point of view, the bit sequences should be further processed by information reconciliation and privacy amplification to obtain final keys. 

The proposed scheme can be adapted to other technologies (narrow-band, OFDM) by changing the cooperative signal generation method according to the most relevant channel features. Moreover, the concept of \textit{s-signal} could be extended to cases where the target signals are not channel estimates but randomly generated information, which leads to a mixed key generation model. However, this complicates the overall key generation procedure because there would be no more direct acquisitions of channel information.

\appendices
\section{The Expectation Maximization solution}
\label{app_em}

For tractability reasons, we consider a Gaussian distribution for the searched signal $\sig$. In order to obtain $\sig$ and the parameters $\epsilon$ and  $\gamma$, the EM algorithm requires two steps for each iteration $i$:
\begin{eqnarray}
\text{E-step:  } \xi_i(\epsilon,\gamma) & = & \mathbb{E}_{\sig|\yy,\epsilon_{i-1},\gamma_{i-1}}
[\ln p(\yy,\sig|\epsilon,\gamma)] \\
\text{M-step:  } (\epsilon_i,\gamma_i) & = & \underset{(\epsilon,\gamma)}{\arg\max} \text{ } \xi_i(\epsilon,\gamma)
\end{eqnarray} 

\paragraph{E-step.} At iteration $i$, the E-step requires two operations:
\begin{itemize}
\item identification of the parameters (mean $\mus$, and covariance $\SIGMAs$) of the searched signal $\sig$ from the expression of the conditional probability density $p(\sig|\yy,\epsilon_{i-1},\gamma_{i-1}$) by factorizing the terms in $\sig$ and those in $\sigt \sig$ from the exponential functions. 
\begin{eqnarray}
p(\sig) &=& p(\sig|\yy,\epsilon_{i-1},\gamma_{i-1}) \\
p(\sig) & = &(2\pi)^{-\frac{N_s}{2}} |{\SIGMAs}|^{-\frac{1}{2}} \mathrm{e}^{- \frac{1}{2} (\sig-\mus)^T \SIGMAs^{-1} (\sig-\mus)} \\
p(\sig|\yy,\epsilon_{i-1},\gamma_{i-1}) & \propto &
p(\mb{y}|\mb{s},\epsilon_{i-1}) \times p(\mb{s}|\gamma_{i-1}) \\
p(\mb{y}|\mb{s},\epsilon_{i-1}) & = &
(2\pi \epsilon_{i-1}^2)^{-\frac{N}{2}} \mathrm{e}^{- \frac{1}{2 \epsilon_{i-1}^2} (\yy-\Hhat \sig)^T (\yy-\Hhat \sig)} \\
p(\mb{s}|\gamma_{i-1}) & = &
(2\pi \gamma_{i-1}^2)^{-\frac{N_s}{2}} \mathrm{e}^{- \frac{1}{2 \gamma_{i-1}^2} (\PP \sig)^T (\PP \sig)}
\end{eqnarray}
\begin{eqnarray}
\sigt \SIGMAs^{-1} \sig - 2 \must \SIGMAs^{-1} \sig + ct = \epsilon_{i-1}^{-2} (\sigt \Hhatt \Hhat \sig -2 \yyt \Hhat \sig + ct) + \gamma_{i-1}^{-2} \sigt \PPt \PP \sig
\end{eqnarray}
with $ct$ a constant. This leads to the expression of the mean of the searched signal $\mus$ and its covariance $\SIGMAs$:
\begin{eqnarray}
\SIGMAs & = & (\epsilon_{i-1}^{-2} \Hhatt \Hhat + \gamma_{i-1}^{-2} \PPt \PP)^{-1}\\
\mus & = & \epsilon_{i-1}^{-2} \SIGMAs \Hhatt \yy
\end{eqnarray} 

\item computation of the expectation based on $\mus$ and $\SIGMAs$ and ignoring the terms that do not depend on $\epsilon$ and $\gamma$ because they do not influence the maximization step. The index of $\mathbb{E}_{\mb{s}|\mb{y},\epsilon_{i-1},\gamma_{i-1}}$ will be neglected for readability purposes.
\begin{eqnarray}
\xi_i(\epsilon,\gamma) & = & ct + \mathbb{E} [\ln p(\mb{y}|\mb{s}, \epsilon,\gamma)] + \mathbb{E} [\ln p(\mb{s} | \epsilon,\gamma)] \\
& = & ct - \frac{1}{2 \epsilon^2} \mathbb{E}[(\yy - \Hhat \sig)^T(\yy - \Hhat \sig)] - \frac{N}{2} \ln (2 \pi \epsilon^2) \nonumber \\
& - & \frac{1}{2 \gamma^2} \mathbb{E}[(\PP \sig)^T(\PP \sig)] - \frac{N_s}{2} \ln (2 \pi \gamma^2) \\
& = & ct - \frac{1}{2 \epsilon^2} [\yyt \yy - 2 \yyt \Hhat \mus + \mathrm{Tr}(\Hhatt \Hhat \SIGMAs) + \must \Hhatt \Hhat \mus] -\frac{N}{2} \ln (2 \pi \epsilon^2)   \nonumber \\
& - & \frac{1}{2 \gamma^2} [\mathrm{Tr}(\PPt \PP \SIGMAs) + \must \PPt \PP \mus] \nonumber - \frac{N_s}{2} \ln (2 \pi \gamma^2)\\
\xi_i(\epsilon,\gamma) &=&  cte  - \frac{1}{2 \epsilon^2} T_1 - \frac{N}{2} \ln (2 \pi \epsilon^2) - \frac{1}{2 \gamma^2} T_2 - \frac{N_s}{2} \ln (2 \pi \gamma^2) \\
T1 &=& \yyt \yy - 2 \yyt \Hhat \mus + \mathrm{Tr}(\Hhatt \Hhat \SIGMAs) + \must \Hhatt \Hhat \mus\\
T2 & = & \mathrm{Tr}(\PPt \PP \SIGMAs) + \must \PPt \PP \mus 
\end{eqnarray}
\end{itemize}

\paragraph{M-step.} At iteration $i$, the M-step consists in the derivation with respect to $\epsilon$ and $\gamma$:
\begin{eqnarray}
T_1 \frac{2\epsilon}{2 \epsilon^4} - \frac{N}{2} \frac{4 \pi \epsilon}{2 \pi \epsilon^2} & = & 0 \\
T_2 \frac{2\gamma}{2 \gamma^4} - \frac{N_s}{2} \frac{4 \pi \gamma}{2 \pi \gamma^2} & = & 0 
\end{eqnarray} 
which leads to the final result:
\begin{eqnarray}
\epsilon_i & = & \sqrt{\frac{T1}{N}} \\
\gamma_i 	& = & \sqrt{\frac{T2}{N_s}} 
\end{eqnarray}


\bibliographystyle{ieeetr}
\bibliography{../my_bib_phd}   

\end{document}